\documentclass[journal,english]{IEEEtran}

\usepackage{cite}
\usepackage{mdframed}
\usepackage{epsfig,subfigure}

\usepackage{amssymb}
\usepackage{amsmath}
\usepackage{color}

\allowdisplaybreaks
\def\BibTeX{{\rm B\kern-.05em{\sc i\kern-.025em b}\kern-.08em
    T\kern-.1667em\lower.7ex\hbox{E}\kern-.125emX}}
\usepackage{balance}

\usepackage{hyperref}

\usepackage{arydshln}

\usepackage{graphicx}
\usepackage{color}
\usepackage[dvipsnames]{xcolor}

\definecolor{armygreen}{rgb}{0.29, 0.33, 0.13}

\usepackage{amssymb}

\usepackage{amsmath,amsfonts,amssymb}

\usepackage{verbatim}
\usepackage{stfloats}
\usepackage[bookmarks=false]{}
\usepackage{algorithm}

\usepackage{algcompatible}

\usepackage{tikz,pgfplots,tikzpeople}
\pgfplotsset{compat=1.18}

\newtheorem{theorem}{Theorem}

\newtheorem{lemma}{Lemma}

\newtheorem{remark}{Remark}

\begin{document}
\date{}

\title{On the Capacity of Hierarchical Secure Aggregation with 
Groupwise Keys}

\author{\IEEEauthorblockN{
Minyang Lu\IEEEauthorrefmark{1},
Zhou Li\IEEEauthorrefmark{1},
Haiqiang Chen\IEEEauthorrefmark{1},
and Min Xie\IEEEauthorrefmark{1} 
}\\
\IEEEauthorblockA{
Guangxi Key Laboratory of Multimedia Communications and Network Technology, Guangxi University\IEEEauthorrefmark{1}\\
Email:~\IEEEauthorrefmark{1}luminyang@st.gxu.edu.cn,~\IEEEauthorrefmark{1}\{lizhou,haiqiang\}@gxu.edu.cn,~\IEEEauthorrefmark{1}gxxiemin@163.com}
}

\IEEEpeerreviewmaketitle
\pgfplotsset{compat=1.18}

\maketitle
\begin{abstract}
We study the hierarchical secure aggregation problem with groupwise keys. The problem consists of an aggregation server, $U$ relays, and $UV$ users, where each relay serves $V$ disjoint users, and each subset of $G$ users shares an independent groupwise key. Two security requirements are imposed: relay security and server security. Specifically, each relay must not learn any information about the users' inputs, and the server must not learn any additional information beyond the recovered sum of all inputs.

We first show that the problem is infeasible when $G = 1$. For the feasible regime $1 < G \le UV$, we fully characterize the optimal rate region. In particular, we prove that both each user and each relay must transmit at least one symbol per input symbol. Furthermore, we characterize the minimum required groupwise key rate as
$\max\left\{\frac{V}{\binom{UV}{G} - \binom{(U-1)V}{G}},\;
\frac{U - 1}{\binom{UV}{G} - U \binom{V}{G}}\right\},$
where the two terms correspond to the constraints imposed by relay security and server security, respectively.

For achievability, we propose an explicit linear coding scheme based on structured precoding matrices, and show that it satisfies both correctness and security requirements. The construction avoids permutation-based symmetrization by leveraging sufficiently generic matrix designs over large fields. Finally, we establish a matching converse, thereby characterizing the optimal rate region.
\end{abstract}


\begin{IEEEkeywords}
Hierarchical secure aggregation, groupwise keys, information theoretic security, groupwise key. 
\end{IEEEkeywords}

\section{Introduction}

Secure aggregation is a fundamental primitive in distributed computation~\cite{Blaum_Bruck_Vardy,Dimakis_survey,VJ_Survey} and federated learning~\cite{konecny2016federated,kairouz2021advances,rieke2020future}. In this setting, an aggregation server aims to compute a prescribed function of users’ private inputs while learning no additional information beyond the desired output. Among various functions, secure summation has received particular attention, as it serves as a core building block for privacy-preserving learning and large-scale data analytics.

From an information-theoretic perspective, secure summation has been extensively studied under a variety of network models and adversarial assumptions. Early works focused on the single-server setting and established fundamental limits on communication efficiency, key size, and robustness against user collusion~\cite{9834981,zhao2023secure}. Subsequent research has significantly broadened this line of work, incorporating practical constraints such as user dropout~\cite{so2022lightsecagg}, and optimal constructions based on groupwise or MDS-type key structures~\cite{zhao2023secure,wan2024information,wan2024capacity,Li_Zhang_GroupwiseDSA}. More recently, decentralized variants have also been investigated~\cite{Zhang_Li_Wan_DSA,Li_Zhang_GroupwiseDSA,Li_Zhang_WeaklyDSA}. A central insight emerging from these works is that groupwise keys, shared among subsets of users, enable optimal communication–key tradeoffs in flat network settings.

Motivated by large-scale deployments in federated learning, hierarchical architectures have recently attracted increasing attention. In such systems, users are organized into clusters, each connected to an intermediate relay, which in turn communicates with a central aggregation server. This architecture naturally arises due to scalability and communication constraints, but also introduces new security challenges, as both relays and the central server may act as honest-but-curious adversaries.

To capture these challenges, Zhang \emph{et al.}~\cite{zhang2024optimal} introduced the hierarchical secure aggregation (HSA) framework, providing an information-theoretic formulation for multi-layer aggregation systems. Follow-up works have further explored HSA under various models, including wireless settings, collusion constraints, and heterogeneous system assumptions~\cite{10806947,egger2024privateaggregationhierarchicalwireless,zhang2025fundamental,11195652,li2025collusionresilienthierarchicalsecureaggregation,egger2023private,lu2024capacity,Li_Zhang_WeaklyHSA}. These studies have established feasibility conditions and characterized communication or key requirements under specific constructions.

However, most existing HSA schemes rely on pairwise or cluster-based key designs that are tightly coupled to the hierarchical topology. While such constructions achieve desired security guarantees, they restrict the key-sharing structure and do not fully exploit the flexibility offered by groupwise keys. As a result, it remains unclear whether hierarchical secure aggregation can benefit from the same optimality properties of groupwise key designs observed in flat settings.

In this paper, we address this gap by studying hierarchical secure aggregation with groupwise keys. We consider a three-layer network consisting of an aggregation server, $U$ relays, and $UV$ users, where each relay serves a disjoint cluster of $V$ users (see Fig.~\ref{fig:my_image}). Each subset of $G$ users shares an independent groupwise key, and each user holds all keys corresponding to groups it belongs to. We impose two security requirements: relay security and server security. The goal is to securely compute the sum of all users’ inputs while characterizing the fundamental tradeoffs between communication and key resources.

Our main contribution is a complete information-theoretic characterization of this problem. We first establish feasibility conditions, showing that secure aggregation is impossible when $G=1$. For the feasible regime $1 < G \le UV$, we prove that each user and each relay must transmit at least one symbol per input symbol, establishing fundamental communication lower bounds. We further characterize the minimum required groupwise key rate and show that it decomposes into two components corresponding to relay-side and server-side security constraints, with the overall requirement determined by the more stringent one.

To establish achievability, we construct explicit linear schemes based on groupwise keys and structured precoding matrices. These schemes ensure correct aggregation while guaranteeing information-theoretic security at both the relay side and the server side. Several examples are provided to illustrate the constructions and to highlight how the required key resources scale with system parameters.

Overall, our results extend the theory of secure aggregation from flat networks to hierarchical architectures with general groupwise key structures, providing new insights into the interplay between communication efficiency, key design, and security in distributed systems.

\begin{figure}[!hbt]
    \centering
\includegraphics[width=0.45\textwidth,height=0.35\textwidth, keepaspectratio]{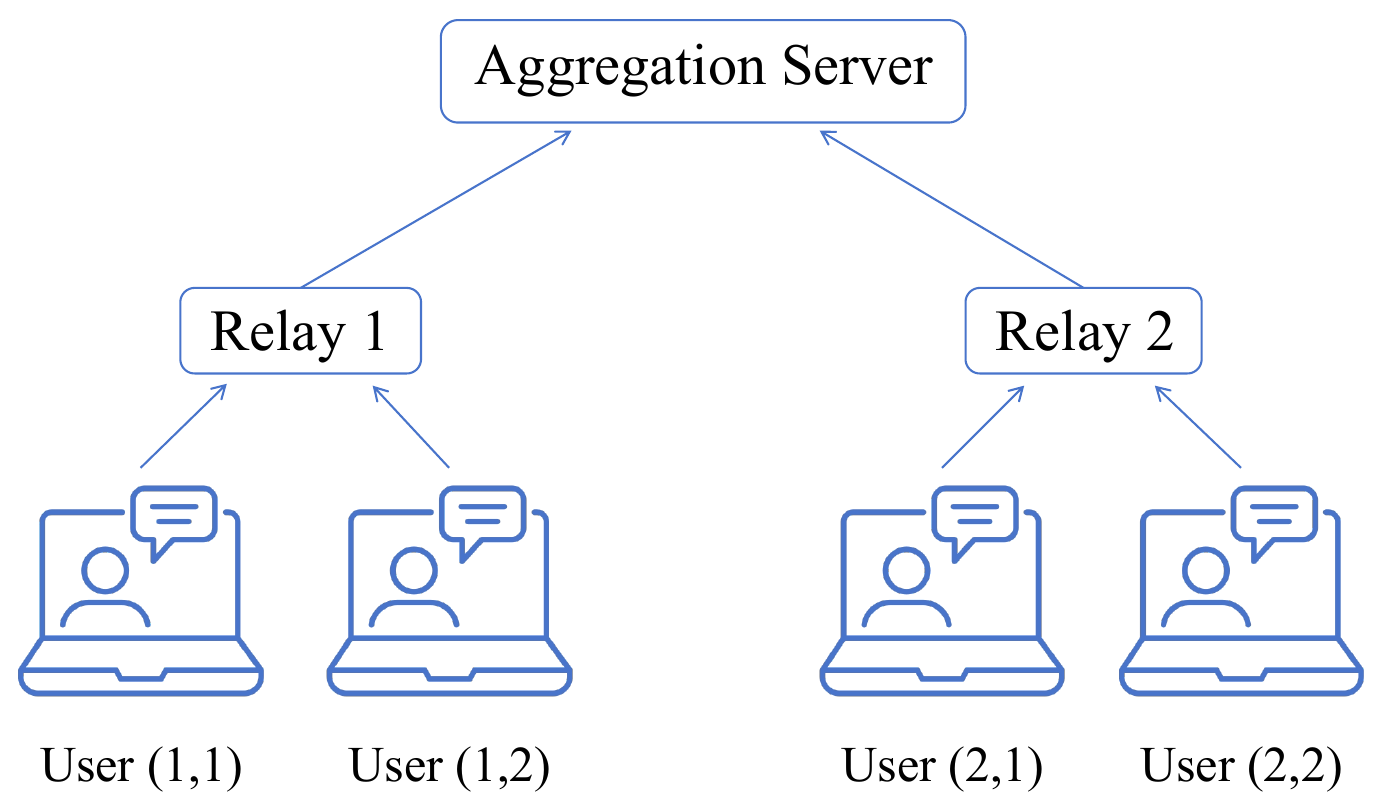}
 \caption{\small The hierarchical secure aggregation method with $(U, V,G) = (2, 2 ,2)$. In this figure, there is a aggregation server and two relays. Each relay is connected to two users. $G = 2$ means that every 2 users share an equal-sized independent key.} 
    \label{fig:my_image} 
\end{figure}
\bigskip
{\em Notation: Bold capital letters, bold lowercase letters, and calligraphic fonts will be used to denote arrays, vectors, and sets, respectively. For positive integers $K_1$, $K_2$, $K_1 < K_2$, we use the notation $[K_1:K_2] \triangleq \{K_1, K_1+1,\cdots, K_2\}$, and $[1:K_2]$ is abbreviated as $[K_2]$. 
Denote $\{A_i\}_{i\in[n]} \stackrel{\triangle}{=} \{A_1, \dots, A_n\}$. $a \bmod b$ represents the modulo operation on $a$ with integer divisor $b$ and in this paper we let $(a \bmod b) \in [0:b]$ (i.e., we let  $a \bmod b$ = $b$ if $b$ divides $a$), define that $\langle b \rangle_a = b \bmod a$. The notation $|\mathcal{A}|$ is used to denote the cardinality of a set $\mathcal{A}$. The notation $|\mathcal{B}|$ is used to denote the cardinality of a set $\mathcal{B}$. For two sets $\mathcal{A}$ and $\mathcal{B}$, we use $\mathcal{A}\backslash\mathcal{B}$ to denote the set of elements that belong to $\mathcal{A}$ but not $\mathcal{B}$. The notation $\binom{\mathcal{A}}{G}$ is used to denote all subsets of $\mathcal{A}$ with cardinality $G$, i.e., $\binom{\mathcal{A}}{G} \triangleq \{\mathcal{G}:\mathcal{G}\subset{\mathcal{A}},|\mathcal{G}|=G\}$, if $|\mathcal{A}| < G$, then $\binom{|\mathcal{A}|}{G} = 0$. Denote $\mathcal{B} = \{B_1,\cdots,B_{|\mathcal{B}|}\}$, denote $\mathcal{B}\times [K_2] \stackrel{\triangle}{=} \{(B_1,1),\cdots,(B_1,K_2),\cdots,(B_{|\mathcal{B}|},1),\cdots,(B_{|\mathcal{B}|},K_2)\}$. Let $a$ and $b$ be positive integers. Define $c = \left\lceil \frac{b}{a} \right\rceil$ as the ceiling of $b/a$, i.e., the smallest integer satisfying $c \ge b/a$.}

\section{Problem Statement}

We consider a secure aggregation problem over a three-layer hierarchical network consisting of one central server, $U \geq 2$ relays, and $UV$ users. Each relay is connected to a disjoint group of $V$ users. Communication occurs over two uplink hops: from users to their associated relays, and from relays to the central server (see Fig.~\ref{fig:my_image} for an example). All links are assumed to be noiseless.
The $v$-th user connected to Relay $u$ is indexed by $(u,v) \in [U] \times [V]$. The set of users associated with Relay $u$ is defined as
$\mathcal{M}_u \triangleq \{(u,v)\}_{v \in [V]}.$
Each user $(u,v)$ holds an input vector $W_{u,v} \in \mathbb{F}_q^L$. Let
$W_{[U]\times[V]} \triangleq \{W_{u,v}\}_{(u,v)\in[U]\times[V]}$
denote the collection of all user inputs. The inputs are assumed to be mutually independent and uniformly distributed, i.e.,
\begin{align}
H\big(W_{[U]\times[V]}\big)
&= \sum_{(u,v)\in[U]\times[V]} H(W_{u,v}), \label{independent} \\
H(W_{u,v}) &= L, \quad \forall (u,v)\in[U]\times[V]. \label{inputsize}
\end{align}

Each user $(u,v)$ is equipped with a private key $Z_{u,v}$ with entropy $H(Z_{u,v}) = L_Z$. Let
$Z_{[U]\times[V]} \triangleq \{Z_{u,v}\}_{(u,v)\in[U]\times[V]}$
denote the collection of all user keys. 
We assume the existence of a trusted third party that generates and distributes the keys.
Furthermore, the inputs and keys are independent, i.e.,
\begin{align}
&H\big(W_{[U]\times[V]}, Z_{[U]\times[V]}\big)\notag\\
=& \sum_{(u,v)\in[U]\times[V]} H(W_{u,v})
+ H\big(Z_{[U]\times[V]}\big). \label{independent2}
\end{align}

In this work, we adopt a symmetric \emph{groupwise key} structure. Each key is shared among a group of $G \in [UV]$ users, and all such keys are mutually independent.
Specifically, let $\binom{[U]\times[V]}{G}$ denote the collection of all user subsets of size $G$. For each group $\mathcal{G} \in \binom{[U]\times[V]}{G}$, we generate an independent random variable $S_{\mathcal{G}}$, consisting of $L_S$ i.i.d. symbols over $\mathbb{F}_q$. The key $S_{\mathcal{G}}$ is shared among all users in $\mathcal{G}$.
The key variable available at user $(u,v)$ is given by
\begin{equation}
Z_{u,v}
= \{ S_{\mathcal{G}} \}_{\mathcal{G} \in \binom{[U]\times[V]}{G},\; (u,v)\in\mathcal{G}}, \quad \forall (u,v)\in[U]\times[V].
\label{individualkey}
\end{equation}

All keys $\{S_{\mathcal{G}}\}$ are mutually independent and independent of $\{W_{u,v}\}$. Therefore, we have
\begin{align}
&H\big( W_{[U]\times[V]}, Z_{[U]\times[V]} \big) \notag\\
=& H\big( W_{[U]\times[V]}, \{S_{\mathcal{G}}\}_{\mathcal{G} \in \binom{[U]\times[V]}{G}} \big) \notag\\
=& \sum_{(u,v)\in[U]\times[V]} H(W_{u,v})
+ \sum_{\mathcal{G} \in \binom{[U]\times[V]}{G}} H(S_{\mathcal{G}}) \notag\\
=& \sum_{(u,v)\in[U]\times[V]} H(W_{u,v}) + \binom{UV}{G} L_S.
\label{gruopwisesize}
\end{align}

The hierarchical secure aggregation scheme operates over two hops.

\textbf{First Hop:}
Each user $(u,v) \in [U]\times[V]$ transmits a message $X_{u,v}$ of entropy $H(X_{u,v}) = L_X$ to Relay $u$, which is generated as a function of $W_{u,v}$ and $Z_{u,v}$. This implies
\begin{equation}
H(X_{u,v} \mid W_{u,v}, Z_{u,v}) = 0, \quad \forall (u,v)\in[U]\times[V].
\label{messageX}
\end{equation}

\textbf{Second Hop:}
Each relay $u \in [U]$ transmits a message $Y_u$ of entropy $H(Y_u) = L_Y$ to the aggregation server, which is generated as a function of the received messages $\{X_{u,v}\}_{v\in[V]}$. This implies
\begin{equation}
H\big(Y_u \mid \{X_{u,v}\}_{v\in[V]}\big) = 0, \quad \forall u\in[U].
\label{messageY}
\end{equation}

Based on the above two-hop communication structure, the aggregation server is required to reliably compute the desired aggregation of the user inputs. In this work, the goal is to recover the sum of all inputs.

\textit{Correctness:}
The server is required to reliably compute the sum of all user inputs $\sum_{(u,v)\in[U]\times[V]} W_{u,v}$ from the relay transmissions $\{Y_u\}_{u\in[U]}$. In particular, perfect recovery is required, which is captured by
\begin{equation}
H\Bigg( \sum_{(u,v)\in [U]\times [V]} W_{u,v} \,\Big|\, \{Y_u\}_{u \in [U]} \Bigg) = 0.
\label{Correctness}
\end{equation}

We now formalize the security requirements of the problem. In addition to correctness, we impose security guarantees against both intermediate relays and the final aggregation server.

\textit{Relay Security:} It requires that even if relay $u$ receives the information $\{X_{u,v}\}_{v \in [V]}$ from the connected users, it cannot infer any information about the input $W_{u,v}$. That is to say, 
\begin{equation}
I\big( \{X_{u,v}\}_{v\in[V]}; W_{[U]\times[V]}  \big) = 0,
\quad \forall u\in[U].
\label{RelaySecurity}
\end{equation}

\textit{Server Security:}
The aggregation server, upon receiving the messages $\{Y_u\}_{u\in[U]}$ from the relays, cannot obtain any information about the individual inputs $\{W_{u,v}\}_{(u,v)\in[U]\times[V]}$ beyond the intended total sum $\sum_{(u,v)\in[U]\times[V]} W_{u,v}$. Formally, this is characterized by the conditional mutual information

\begin{equation}
I\Big( \{Y_u\}_{u\in[U]}; W_{[U]\times[V]} \,\Big|\, \sum_{(u,v)\in[U]\times[V]} W_{u,v} \Big) = 0.
\label{ServerSecurity}
\end{equation}

Based on the above problem formulation, we define the communication and key rates of the proposed hierarchical secure aggregation scheme.

The communication rates $R_X$ and $R_Y$ represent the number of $q$-ary symbols transmitted in each message $X_{u,v}$ and $Y_u$ per source symbol, respectively, while $R_S$ denotes the required groupwise key rate per source symbol. The rates are defined as
\begin{equation}
R_X \triangleq \frac{L_X}{L}, \quad
R_Y \triangleq \frac{L_Y}{L}, \quad
R_S \triangleq \frac{L_S}{L}.
\label{Rates}
\end{equation}

A rate tuple $(R_X, R_Y, R_S)$ is said to be achievable if, for sufficiently large blocklength $L$, there exists a secure aggregation scheme satisfying \eqref{messageX} and \eqref{messageY}, such that the correctness constraint \eqref{Correctness} and the relay security constraint \eqref{RelaySecurity} and the server security constraint \eqref{ServerSecurity} are simultaneously satisfied.

The rate region $\mathcal{R}^\ast$ is defined as the closure of the set of all achievable rate tuples $(R_X, R_Y, R_S)$ over sufficiently large $L$.

\section{Main Result}
\label{sec-main}

In this section, we characterize the optimal rate region of the hierarchical secure aggregation problem. 
We identify the fundamental limits imposed by the relay and server security constraints on the two hops, which determine the minimum required groupwise key rate. 
In addition, we show that reliable communication requires $R_X$ and $R_Y$ to exceed one symbol per source symbol. 
The overall performance is therefore governed by the most stringent of these constraints. 
The main result is summarized in the following theorem.
\begin{theorem}
\label{th-main}
Consider the hierarchical secure aggregation problem with $U$ relays and $V$ users per relay. If $G = 1$, the problem is infeasible.
If  $1 < G \leq UV$, the optimal rate region is given by
\begin{align}
\mathcal{R}^* = \left\{ 
\begin{array}{l}
(R_X, R_Y, R_S): R_X \geq 1, 
R_Y \geq 1, \\
R_S \geq  \max\biggm\{
\frac{V}{\binom{UV}{G} - \binom{(U-1)V}{G}},\;
\frac{U - 1}{\binom{UV}{G} - U \binom{V}{G}}
\biggm\}
\end{array}
\right\}
\label{3-4}
\end{align}
\end{theorem}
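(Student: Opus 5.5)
The plan is to prove Theorem~\ref{th-main} in three parts: infeasibility for $G=1$, a converse for $1<G\le UV$, and a matching linear achievability scheme.

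\emph{Infeasibility for $G=1$.} When $G=1$, each $Z_{u,v}=S_{\{(u,v)\}}$ is private to user $(u,v)$ and independent of every other key. Relay security forces $I(X_{u,v};W_{u,v})=0$. The messages $Y_u$ are functions of independent pieces $(W_{u,v},Z_{u,v})$. From $Y_{[U]}$ the server recovers $\sum W$, which has entropy $L$. The independence across users then gives $I(Y_{[U]};W_{u,v}\mid W_{\text{rest}})>0$ for some user; take $U=2$ and fix the other users' inputs. Combining this with the Markov structure and relay security yields a contradiction. Concretely, for $V=1$, $X_{u,1}$ must be independent of $W_{u,1}$ while being a function of $(W_{u,1},Z_{u,1})$ with $Z_{u,1}$ independent of everything else, so $Y_{[U]}$ is independent of $W$, contradicting correctness. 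For general $V$, the relay sees all $X_{u,v}$, $v\in[V]$, which are independent across $v$, so each is independent of $W$ jointly. Therefore $(X_{u,v})_{u,v}$ is independent of $W_{[U]\times[V]}$, again contradicting correctness.

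\emph{Converse for $1<G\le UV$.} The bounds $R_X\ge1$ and $R_Y\ge1$ follow by the standard cut argument. Fix all inputs except $W_{u,v}$ (respectively except $\{W_{u,v}\}_{v}$), and note that the sum must be recoverable with all keys given. This gives $L_X\ge H(X_{u,v}\mid W_{\setminus},Z)\ge L$, and similarly $L_Y\ge L$.

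For the relay-side key bound, relay security with $\sum_v I(X_{u,v};\cdot)$ arguments shows $H(\{X_{u,v}\}_v)$ must contain $V L$ symbols of randomness independent of $W$. This randomness can only come from keys held by users in $\mathcal{M}_u$. But keys held entirely by... I would instead condition on $Z_{(U-1)\text{ other clusters}}$: $Y_u$ together with the other relays' full information must still decode the sum. So I would show
\[
H\big(\{S_{\mathcal{G}}\}_{\mathcal{G}\cap\mathcal{M}_u\neq\emptyset}\big)\ \ge\ VL,
\]
via $H(X_{\mathcal{M}_u})\ge VL$ together with $I(X_{\mathcal{M}_u};W)=0$. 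The number of such groups is $\binom{UV}{G}-\binom{(U-1)V}{G}$, which gives the first term.

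For the server-side key bound, server security requires the $U$ messages $Y_u$ to hide $U-1$ dimensions of the individual relay partial sums. The keys that cancel at the server must not be internal to a single cluster: a key with $\mathcal{G}\subseteq\mathcal{M}_u$ can be removed by that relay and gives no server protection. One shows
\[
\Big(\binom{UV}{G}-U\binom{V}{G}\Big)L_S\ \ge\ H\big(Y_{[U]}\mid \text{intra-cluster keys}\big)-L\ \ge\ (U-1)L,
\]
using $H(Y_{[U]})\ge UL$ from the cut bounds and $I(Y;W\mid\sum W)=0$. I expect this step to be the main obstacle, because it requires the right genie (giving intra-cluster keys to the server) while preserving the security equality.

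\emph{Achievability.} Set $L_S$ to the max of the two terms times $L$, after normalizing $L$ so that everything is an integer. Each user sends $X_{u,v}=W_{u,v}+\mathbf{H}_{u,v}\mathbf{S}$, where $\mathbf{S}$ stacks all keys of user $(u,v)$ and $\mathbf{H}_{u,v}$ is a generic precoding matrix over a large field. The design constraints are:
\begin{itemize}
\item $\sum_{u,v}\mathbf{H}_{u,v}\mathbf{S}=0$, so that the sum is correct;
\item the $VL$ rows $\{\mathbf{H}_{u,v}\}_{v}$ have full rank, which gives relay security;
\item the relay aggregates $Y_u=\sum_v X_{u,v}$ have inter-cluster key components $\{\sum_v \mathbf{H}_{u,v}\}_u$ spanning rank $(U-1)L$, which gives server security.
\end{itemize}
I would impose the zero-sum constraint linearly and pick the remaining entries uniformly at random, then argue via the Schwartz--Zippel lemma that the rank conditions hold with high probability whenever the key dimensions match the counts above.
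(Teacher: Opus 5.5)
\textbf{Overall comparison.} Most of your plan follows the paper: the cut-set bounds for $R_X,R_Y\ge 1$, the relay-side key bound via $H(X_{\mathcal M_u})\ge VL$ plus relay security (the paper's Lemma~5), and zero-sum generic precoders with the two rank conditions and Schwartz--Zippel for achievability.

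Your $G=1$ argument takes a different route, and it is valid. Relay security gives $I(X_{u,v};W_{u,v})=0$. When no key is shared, the pairs $(W_{u,v},X_{u,v})$ are mutually independent across users. Hence $\{X_{u,v}\}$, and therefore $Y_{[U]}$, is independent of $W_{[U]\times[V]}$, which contradicts correctness. The paper instead bipartitions the relays and uses correctness plus server security to force $I(Z_{\mathcal U_1\times[V]};Z_{\mathcal U_2\times[V]})\ge L$. Your version is shorter; the paper's shows that infeasibility does not depend on relay security.

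One small repair on the relay side. $H(X_{\mathcal M_u})\ge VL$ comes from the per-user cut $H(X_{u,v}\mid X_{u,[V]\setminus\{v\}})\ge H(X_{u,v}\mid W_{([U]\times[V])\setminus\{(u,v)\}},Z_{[U]\times[V]})\ge L$ together with the chain rule. The per-relay cut you mention there only yields $L$.

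\textbf{The gap: the server-side key bound.} This is exactly the step you flagged. Write $S_{\mathrm{intra}}$ for the keys $S_{\mathcal G}$ with $\mathcal G\subseteq\mathcal M_u$ for some $u$, $S_{\mathrm{inter}}$ for the remaining keys, and $\Sigma$ for the total sum.

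\emph{Why the step needs an unproved condition.} Since $Y_{[U]}$ is a function of $(W,S_{\mathrm{intra}},S_{\mathrm{inter}})$, one gets $H(Y_{[U]}\mid S_{\mathrm{intra}})\le H(S_{\mathrm{inter}})+L+I(S_{\mathrm{intra}};W\mid Y_{[U]})$. Your first inequality is therefore what you obtain only if the genie is harmless, i.e.\ $I(S_{\mathrm{intra}};W\mid Y_{[U]})=0$. Nothing in the constraints implies this. A user may append an input symbol one-time-padded with an otherwise unused intra-cluster key symbol; the relay forwards it and the server ignores it. Correctness, relay security and server security all survive this modification.

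\emph{Why the step is actually false.} Take $(U,V,G)=(2,2,2)$, $L=1$, $L_S=2$.
\begin{itemize}
\item Users $(1,1),(1,2),(2,1),(2,2)$ mask their inputs with $S_{(1,1),(2,1)}(1)$, $S_{(1,2),(2,2)}(1)$, $-S_{(1,1),(2,1)}(1)$, $-S_{(1,2),(2,2)}(1)$, respectively, and each relay sums what it receives.
\item In addition, relay~1 forwards the six unused inter-cluster key symbols in the clear.
\item Every user also sends $W_{u,v}+S_{\mathcal M_u}(v)$, which its relay forwards.
\end{itemize}
This scheme is feasible, yet $H(Y_{[U]}\mid S_{\mathrm{intra}})-L=10>8=4L_S$. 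So the inequality is false in general, not merely unproved. The paper handles this step by asserting that $Y_{[U]}$ is determined by $W$ and $S_{\mathrm{inter}}$. The same construction shows that assertion is not true in general either.

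\textbf{A clean repair: split into two cases.}

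\emph{Case $G>V$.} There are no intra-cluster keys, so your chain holds with an empty genie. Server security gives $I(Y_{[U]};W)=I(Y_{[U]};\Sigma)\le L$, and your cut bounds give $H(Y_{[U]})\ge UL$. Hence $\binom{UV}{G}L_S\ge H(Y_{[U]}\mid W)=H(Y_{[U]})-I(Y_{[U]};\Sigma)\ge (U-1)L$.

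\emph{Case $2\le G\le V$.} Here the server bound is not needed, because the relay term already dominates. Let $N_{\mathrm{inter}}=\binom{UV}{G}-U\binom VG$.
\begin{itemize}
\item Count pairs (cluster, key meeting it). Each inter-cluster key meets at most $G\le V$ clusters, so $U\big[\binom{UV}{G}-\binom{(U-1)V}{G}\big]\le U\binom VG+VN_{\mathrm{inter}}$.
\item Count keys with $G-1$ members in one cluster and one member in another. This gives $VN_{\mathrm{inter}}\ge\binom U2V^2\binom V{G-1}\ge U(U-1)\binom VG$.
\item Multiply the first bound by $U-1$ and substitute the second. This yields $(U-1)\big[\binom{UV}{G}-\binom{(U-1)V}{G}\big]\le VN_{\mathrm{inter}}$.
\end{itemize}
The last inequality says precisely that the first term of the max is at least the second.

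\textbf{A remaining gap in achievability, shared with the paper.} Your Schwartz--Zippel step still needs an argument that the relevant minors are not identically zero under the zero-sum constraints. Matching the dimension counts does not supply this.
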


\begin{remark}\label{remark-1}
The two terms in $R_S$ in Theorem~\ref{th-main} correspond to the constraints imposed by relay security and server security, respectively.

For the relay security constraint, the term 
$\frac{V}{\binom{UV}{G} - \binom{(U-1)V}{G}}$
arises because only those groupwise keys that involve at least one user connected to the considered relay can contribute to protecting its input. 
The quantity $\binom{(U-1)V}{G}$ counts the number of groupwise keys that are formed entirely by users outside this relay, and hence do not provide any protection for it. 
Therefore, these keys must be excluded. 
When $(U-1)V < G$, such groupwise keys do not exist, and the subtraction term vanishes.

For the server security constraint, the term 
$\frac{U - 1}{\binom{UV}{G} - U \binom{V}{G}}$
is due to the requirement that the server should only recover the desired sum while all undesired components are canceled. 
Groupwise keys that are shared only among users within a single relay, counted by $U \binom{V}{G}$, do not create inter-relay coupling and therefore cannot be canceled across relays at the server. 
As a result, these keys must be excluded. 
When $V < G$, such keys do not exist, since any groupwise key must involve users from at least two relays, and the subtraction term disappears.
\end{remark}

\begin{remark}\label{rem:no_feasible_solution_ieee}
According to Theorem~\ref{th-main}, when $G = 1$, the hierarchical secure aggregation problem is infeasible. 
In this case, each groupwise key is held by only a single user, and hence no key is shared across different users. 
As a result, there is no correlation among the keys that can be exploited at the server to cancel undesired components during aggregation. 
Consequently, the server cannot eliminate the interference introduced by the keys, and the correctness constraint in \eqref{Correctness} cannot be satisfied.
\end{remark}

\section{Achievable Scheme}

In this section, we present the proposed secure aggregation scheme. We first illustrate the main ideas through two representative examples, which reveal the key mechanisms of groupwise key evolution under relay security and server security constraints. Based on these examples, we then generalize the construction to arbitrary parameters $U$, $V$, and $G$.

\subsection{Achievability Scheme: Example 1 with $U=2$, $V=2$, and $G=2$}

The model diagram of this example is shown in Fig.~\ref{fig:my_image}. 
In this example, relay security is the dominant constraint. 
We show that the rate tuple $(R_X, R_Y, R_S) = (1,1,2/5)$ is achievable.
Let $L = L_X = L_Y = 5$, i.e., each input $W_{u,v}$ consists of $5$ symbols in $\mathbb{F}_q$, written as
$W_{u,v} = [W_{u,v}(1), W_{u,v}(2), W_{u,v}(3), W_{u,v}(4), W_{u,v}(5)]^{\mathsf{T}}$, 
for $(u,v)\in[2]\times[2]$.
Also, let $L_S = 2$, so that each groupwise key $S_{\mathcal{G}}$ consists of $2$ symbols in $\mathbb{F}_q$, i.e.,
$S_{\mathcal{G}} = [S_{\mathcal{G}}(1), S_{\mathcal{G}}(2)]^{\mathsf{T}}$ 
for all $\mathcal{G}\subset[2]\times[2]$ with $|\mathcal{G}|=2$.

We next specify the transmitted messages from users to the relay. 
Each user encodes its input together with the associated groupwise keys as follows:
\begin{align}
X_{1,1} =& W_{1,1} + \mathbf{H}_{(1,1),(1,2)} S_{(1,1),(1,2)} + \mathbf{H}_{(1,1),(2,1)} S_{(1,1),(2,1)} \notag \\
& + \mathbf{H}_{(1,1),(2,2)} S_{(1,1),(2,2)}, \notag \\
X_{1,2} =& W_{1,2} - \mathbf{H}_{(1,1),(1,2)} S_{(1,1),(1,2)} + \mathbf{H}_{(1,2),(2,1)} S_{(1,2),(2,1)} \notag \\
& + \mathbf{H}_{(1,2),(2,2)} S_{(1,2),(2,2)}, \notag \\
X_{2,1} =& W_{2,1} - \mathbf{H}_{(1,1),(2,1)} S_{(1,1),(2,1)} - \mathbf{H}_{(1,2),(2,1)} S_{(1,2),(2,1)} \notag \\
& + \mathbf{H}_{(2,1),(2,2)} S_{(2,1),(2,2)}, \notag \\
X_{2,2} =& W_{2,2} - \mathbf{H}_{(1,1),(2,2)} S_{(1,1),(2,2)} - \mathbf{H}_{(1,2),(2,2)} S_{(1,2),(2,2)} \notag \\
& - \mathbf{H}_{(2,1),(2,2)} S_{(2,1),(2,2)}. \label{messageX0}
\end{align}
where each $\mathbf{H}_{\mathcal{G}} \in \mathbb{F}_q^{5 \times 2}$ is a precoding matrix used to embed the groupwise keys into the transmitted signals.

The messages received at the relays are then linearly combined as
\begin{align}
   Y_1 =& X_{1,1} + X_{1,2} \notag \\
   =&  W_{1,1} + W_{1,2} + \mathbf{H}_{(1,1),(2,1)} S_{(1,1),(2,1)} + \mathbf{H}_{(1,1),(2,2)}  \notag \\
   &  S_{(1,1),(2,2)} + \mathbf{H}_{(1,2),(2,1)} S_{(1,2),(2,1)} + \mathbf{H}_{(1,2),(2,2)}  \notag \\
   &S_{(1,2),(2,2)} \notag \\
    Y_2 =& X_{2,1} + X_{2,2} \notag \\
    =&  W_{2,1} + W_{2,2} - \mathbf{H}_{(1,1),(2,1)} S_{(1,1),(2,1)} - \mathbf{H}_{(1,1),(2,2)}  \notag \\
    &  S_{(1,1),(2,2)} - \mathbf{H}_{(1,2),(2,1)} S_{(1,2),(2,1)} - \mathbf{H}_{(1,2),(2,2)}  \notag \\
   & S_{(1,2),(2,2)} \label{messageY0}
\end{align}


It can be verified that the carefully designed precoding matrices $\{\mathbf{H}_{\mathcal{G}}\}$ ensure that all groupwise key components cancel at each relay, thereby satisfying relay security, while the desired sums of user inputs are preserved.

This completes the description of the achievable scheme for this example.

\textit{Decodability and construction:}
We now explain the construction of the precoding matrices $\{\mathbf{H}_{\mathcal{G}}\}$ in \eqref{messageX0}. 
The design exploits a structured zero-sum property across users sharing each groupwise key.

For groups with $|\mathcal{G}|=2$, the matrices are assigned in a symmetric manner, where the two users in each group employ $\mathbf{H}_{\mathcal{G}}$ and $-\mathbf{H}_{\mathcal{G}}$, respectively. 
Under this construction, the correctness condition is satisfied since
$\sum_{u\in[2]} Y_u = \sum_{(u,v)\in[2]\times[2]} W_{u,v}.$

To ensure security, the matrices $\{\mathbf{H}_{\mathcal{G}}\}$ are required to be sufficiently generic. 
In the general analysis, we will show that choosing each $\mathbf{H}_{\mathcal{G}}$ from a sufficiently large design space guarantees that the induced linear system satisfies the required independence conditions. 
This can be achieved by increasing the blocklength $L$, which effectively corresponds to operating over a sufficiently large extension field while keeping the base field $\mathbb{F}_q$ fixed.

For illustration, we provide one valid deterministic construction over $\mathbb{F}_5$. 
The precoding matrices are given as follows:
\begin{align}
\setlength{\arraycolsep}{0.2cm} 
\renewcommand{\arraystretch}{1} 
\begin{array}{ccc}
\mathbf{H}_{(1,1),(1,2)} & \mathbf{H}_{(1,1),(2,1)} & \mathbf{H}_{(1,1),(2,2)} \\
{=}\begin{bmatrix} 
1 & 0 \\
0 & 1 \\
1 & 1 \\
1 & 2 \\
2 & 1 
\end{bmatrix},
&
{=}\begin{bmatrix} 
1 & 2 \\
2 & 1 \\
0 & 1 \\
1 & 0 \\
1 & 1 
\end{bmatrix},
&
{=}\begin{bmatrix} 
1 & 1 \\
0 & 2 \\
2 & 0 \\
1 & 2 \\
2 & 1 
\end{bmatrix},
\\[30pt] 
\mathbf{H}_{(1,2),(2,1)} & \mathbf{H}_{(1,2),(2,2)} & \mathbf{H}_{(2,1),(2,2)} \\
{=}\begin{bmatrix} 
2 & 1 \\
1 & 1 \\
1 & 0 \\
0 & 3 \\
2 & 2 
\end{bmatrix},
&
{=}\begin{bmatrix} 
0 & 1 \\
1 & 0 \\
2 & 1 \\
1 & 2 \\
1 & 1 
\end{bmatrix},
&
{=}\begin{bmatrix} 
1 & 0 \\
1 & 1 \\
2 & 2 \\
2 & 1 \\
0 & 2 
\end{bmatrix}.
\end{array}
\label{Evidencedesign1}
\end{align}

It can be verified that this construction ensures that all groupwise key components cancel appropriately across relays while preserving the desired sum of inputs. Moreover, the randomness (or genericity) of the matrices guarantees that no unintended information leakage occurs, thereby satisfying both relay and server security constraints.

\textit{Relay Security:}
The design of the precoding matrices in \eqref{Evidencedesign1} has a key structural property. 
When $G=2$, each user holds $\binom{UV-1}{G-1}=3$ groupwise keys, and the two users connected to each relay jointly hold 
$\binom{UV}{G} - \binom{(U-1)V}{G} = 5$
distinct groupwise keys.

To ensure relay security in \eqref{RelaySecurity}, the inputs $W_{1,1},$ $W_{1,2}, $ $W_{2,1},$ $W_{2,2}$ must remain independent of the received messages $\{X_{u,v}\}_{v\in[2]}$. 
This is achieved by designing the precoding matrix to be full rank, so that the embedded keys fully mask the inputs. 
In particular, since $5L_S = 2L = 10$, the key space matches the dimension of the transmitted signals, ensuring perfect masking.

We now formalize this argument for Relay~$1$. We have
\begin{align}
&I(W_{1,1},W_{1,2},W_{2,1},W_{2,2};X_{1,1},X_{1,2}) \notag\\
=& H(X_{1,1},X_{1,2}) - H(X_{1,1},X_{1,2} \mid W_{1,1},W_{1,2},W_{2,1},W_{2,2})  \label{eq:step12}\\
=& H(X_{1,1},X_{1,2}) \notag\\
&- H(W_{1,1}+Z_{1,1},W_{1,2}+Z_{1,2} \mid W_{1,1},W_{1,2},W_{2,1},W_{2,2})  \label{eq:step12-1}\\
=& H(X_{1,1},X_{1,2}) - H(Z_{1,1},Z_{1,2} \mid W_{1,1},W_{1,2},W_{2,1},W_{2,2})  \label{eq:step12-2}\\
\stackrel{\eqref{independent2}}{=}& H(X_{1,1},X_{1,2}) - H(Z_{1,1},Z_{1,2})  \label{eq:step12-3}\\
\stackrel{\eqref{messageX0}}{=} & 2L -  \\
H&\!\left(
\underbrace{\begin{array}{c@{}c}
  \left[
  \begin{array}{ccc}
    \mathbf{H}_{(1,1),(1,2)} & \mathbf{H}_{(1,1),(2,1)} & \mathbf{H}_{(1,1),(2,2)} \\
    -\mathbf{H}_{(1,1),(1,2)} & 0 & 0 \\
  \end{array}
  \right.
  \\
  \left.
  \begin{array}{ccc}
   0  &0 \\
    \mathbf{H}_{(1,2),(2,1)} & \mathbf{H}_{(1,2),(2,2)} \\
  \end{array}
  \right]
\end{array}}_{\triangleq \widehat{\mathbf{H}}_{10 \times 10}}
\begin{bmatrix}
S_{(1,1),(1,2)} \\
S_{(1,1),(2,1)} \\
S_{(1,1),(2,2)} \\
S_{(1,2),(2,1)} \\
S_{(1,2),(2,2)}
\end{bmatrix}
\right)  \label{eq:step13}\\
= & 2\times 5 - 10=0.  \label{eq:step14}
\end{align}
The second term of \eqref{eq:step12-3} follows from the independence between the inputs and the key variables (see \eqref{independent2}). 

In \eqref{eq:step13}, the entropy term corresponds to the precoded groupwise keys induced by the construction in \eqref{messageX0}. Since the effective precoding matrix $\widehat{\mathbf{H}}_{10 \times 10}$ is full rank (as shown in \eqref{Evidencedesign1}), and the groupwise keys are mutually independent, this entropy term evaluates to $10$.
Substituting into \eqref{eq:step14}, we obtain $I(W_{1,1},W_{1,2},W_{2,1},W_{2,2};X_{1,1},X_{1,2}) = 0$, which establishes the relay security for Relay~$1$. The result for Relay~$2$ follows by symmetry.

We next proceed to establish the server security.

\textit{Server Security:}  
We now establish the server security condition in \eqref{ServerSecurity}.  
When $G = 2$, the total number of groupwise keys shared among the $4$ users is $\binom{UV}{G} = 6$.  
Among these, $U\binom{V}{G} = 2$ keys are shared exclusively within individual relays and therefore do not contribute to inter-relay masking.  
Consequently, for the relay transmissions $Y_1$ and $Y_2$, only $\binom{UV}{G} - U\binom{V}{G} = 4$ groupwise keys are effective in providing masking against the server.

Since the server is allowed to recover the sum $\sum_{(u,v)\in[2]\times[2]} W_{u,v}$, which contains $L$ symbols, only the remaining uncertainty needs to be protected. 
As $4L_S > L$, sufficient randomness is available to ensure server security. 
We now formalize the above intuition.

We have
\begin{align}
&I\Big({W_{1,1},W_{1,2},W_{2,1},W_{2,2}};{Y_1,Y_2}\Big| \sum_{(u,v)\in [2]\times[2]}W_{u,v}\Big) \notag\\
=& H\Big({Y_1,Y_2}\Big| \sum_{(u,v)\in [2]\times[2]}W_{u,v}\Big) - H\Big({Y_1,Y_2}\Big| \notag\\
&\sum_{(u,v)\in [2]\times[2]}W_{u,v},{W_{1,1},W_{1,2},W_{2,1},W_{2,2}}\Big)  \label{MassageY-1}\\
=& H\Big({X_{1,1}+X_{1,2},X_{2,1}+X_{2,2}}\Big| \sum_{(u,v)\in [2]\times[2]}W_{u,v}\Big) \notag\\
& - H({X_{1,1}+X_{1,2},X_{2,1}+X_{2,2}} |  {W_{1,1},W_{1,2},W_{2,1},W_{2,2}})  \label{MassageY-2}\\
=& H\Big(W_{1,1}+W_{1,2}+Z_{1,1}+Z_{1,2},W_{2,1}+W_{2,2}+Z_{2,1}+\notag\\
&Z_{2,2}\Big| \sum_{(u,v)\in [2]\times[2]}W_{u,v}\Big)  - H(W_{1,1}+W_{1,2}+Z_{1,1}+Z_{1,2},\notag\\
&W_{2,1}+W_{2,2}+Z_{2,1} +Z_{2,2} |  {W_{1,1},W_{1,2},W_{2,1},W_{2,2}} )  \label{MassageY-3}\\
=& H\Big(W_{1,1}+W_{1,2}+Z_{1,1}+Z_{1,2},W_{2,1}+W_{2,2}+Z_{2,1}+\notag\\
&Z_{2,2}, \sum_{(u,v)\in [2]\times[2]}W_{u,v}\Big)-H\Big(\sum_{(u,v)\in [2]\times[2]}W_{u,v}\Big) \notag\\
& - H({Z_{1,1}+Z_{1,2},Z_{2,1}+Z_{2,2}} |  {W_{1,1},W_{1,2},W_{2,1},W_{2,2}})  \label{MassageY-4}\\
\stackrel{\eqref{independent2}}{=}&  H(W_{1,1}+W_{1,2}+Z_{1,1}+Z_{1,2},W_{2,1}+W_{2,2}+Z_{2,1}+\notag\\
&Z_{2,2})-H\Big(\sum_{(u,v)\in [2]\times[2]}W_{u,v}\Big)  - H(Z_{1,1}+Z_{1,2})  \label{MassageY-5}\\
\stackrel{\eqref{messageX0}}{=} &2\times 5-5 \notag\\
&- H{
\left(
\underbrace{\begin{array}{c@{}c}
  \left[
  \begin{array}{ccc}
    \mathbf{H}_{(1,1),(2,1)} & \mathbf{H}_{(1,1),(2,2)} 
  \end{array}
  \right.
  \\
  \left.
  \begin{array}{ccc}
  \mathbf{H}_{(1,2),(2,1)}  & \mathbf{H}_{(1,2),(2,2)}
  \end{array}
  \right]
\end{array}}_{\triangleq \widehat{\mathbf{H}}_{5 \times 10}}
\begin{bmatrix}
S_{(1,1),(2,1)} \\
S_{(1,1),(2,2)} \\
S_{(1,2),(2,1)} \\
S_{(1,2),(2,2)} \\
\end{bmatrix}
\right)
}  \label{MassageY-6} \\
= &5 - 5=0.  \label{MassageY-7} 
\end{align}
In \eqref{MassageY-5}, the first term follows from the correctness constraint in \eqref{Correctness}, which ensures that the sum $\sum_{(u,v)\in [2]\times[2]}W_{u,v}$ is a deterministic function of $W_{1,1}+W_{1,2}+Z_{1,1}+Z_{1,2}$ and $W_{2,1}+W_{2,2}+Z_{2,1}+Z_{2,2}$.  
The third term follows from the independence between the input and key variables (see \eqref{independent2}). Moreover, due to the zero-sum structure of the key design, we have $Z_{1,1}+Z_{1,2}+Z_{2,1}+Z_{2,2}=0$, which implies that $Z_{2,1}+Z_{2,2}$ is fully determined by $Z_{1,1}+Z_{1,2}$.

In \eqref{MassageY-6}, the entropy term corresponds to the precoded groupwise keys induced by the construction in \eqref{messageX0}. Since the effective precoding matrix $\widehat{\mathbf{H}}_{5 \times 10}$ is of rank $5$ (as shown in \eqref{Evidencedesign1}), and the groupwise keys are mutually independent, this term evaluates to $5$.

Therefore, the mutual information in \eqref{MassageY-7} evaluates to zero, which establishes the server security constraint.

\subsection{Achievability Scheme: Example 2 with $U=4$, $V=2$, and $G=7$}

In this example, server security is the dominant constraint. 
We show that the rate tuple $(R_X, R_Y, R_S) = (1,1,3/8)$ is achievable.
Let $L = L_X = L_Y = 8$, i.e., each input $W_{u,v}$ consists of $8$ symbols over $\mathbb{F}_q$, written as
$W_{u,v} = [W_{u,v}(1), W_{u,v}(2), W_{u,v}(3), W_{u,v}(4), W_{u,v}(5),$ $ W_{u,v}(6), W_{u,v}(7), W_{u,v}(8)]^{\mathsf{T}},$
for $(u,v)\in[4]\times[2]$.
Also, let $L_S = 3$, so that each groupwise key $S_{\mathcal{G}}$ consists of $3$ symbols over $\mathbb{F}_q$, i.e.,
$S_{\mathcal{G}} = [S_{\mathcal{G}}(1), S_{\mathcal{G}}(2), S_{\mathcal{G}}(3)]^{\mathsf{T}},$
for all $\mathcal{G}\subset[4]\times[2]$ with $|\mathcal{G}|=7$.


For simplicity, we enumerate all subsets $\mathcal{G} \subset [4]\times[2]$ with $|\mathcal{G}|=7$ as $\{\mathcal{G}_i\}_{i=1}^8$, given by
\begin{align*}
\mathcal{G}_1 &= \{(1,1), (1,2), (2,1), (2,2), (3,1), (3,2), (4,1)\},\\
\mathcal{G}_2 &= \{(1,1), (1,2), (2,1), (2,2), (3,1), (3,2), (4,2)\},\\
\mathcal{G}_3 &= \{(1,1), (1,2), (2,1), (2,2), (3,1), (4,1), (4,2)\},\\
\mathcal{G}_4 &= \{(1,1), (1,2), (2,1), (2,2), (3,2), (4,1), (4,2)\},\\
\mathcal{G}_5 &= \{(1,1), (1,2), (2,1), (3,1), (3,2), (4,1), (4,2)\},\\
\mathcal{G}_6 &= \{(1,1), (1,2), (2,2), (3,1), (3,2), (4,1), (4,2)\},\\
\mathcal{G}_7 &= \{(1,1), (2,1), (2,2), (3,1), (3,2), (4,1), (4,2)\},\\
\mathcal{G}_8 &= \{(1,2), (2,1), (2,2), (3,1), (3,2), (4,1), (4,2)\}.
\end{align*}

We next specify the messages transmitted from users to the relays. 
Each user encodes its input together with the associated groupwise keys as follows:
\begin{align}
X_{1,1} &= W_{1,1} + \mathbf{H}_{\mathcal{G}_1}^{1,1}S_{\mathcal{G}_1} + \mathbf{H}_{\mathcal{G}_2}^{1,1}S_{\mathcal{G}_2}+ \mathbf{H}_{\mathcal{G}_3}^{1,1}S_{\mathcal{G}_3} +  \mathbf{H}_{\mathcal{G}_4}^{1,1}S_{\mathcal{G}_4}+ \notag\\
&\quad + \mathbf{H}_{\mathcal{G}_5}^{1,1} S_{\mathcal{G}_5} + \mathbf{H}_{\mathcal{G}_6}^{1,1} S_{\mathcal{G}_6}+ \mathbf{H}_{\mathcal{G}_7}^{1,1} S_{\mathcal{G}_7} \notag\\
X_{1,2} &= W_{1,2} + \mathbf{H}_{\mathcal{G}_1}^{1,2}S_{\mathcal{G}_1} + \mathbf{H}_{\mathcal{G}_2}^{1,2}S_{\mathcal{G}_2}+ \mathbf{H}_{\mathcal{G}_3}^{1,2}S_{\mathcal{G}_3} +  \mathbf{H}_{\mathcal{G}_4}^{1,2}S_{\mathcal{G}_4}+ \notag\\
&\quad + \mathbf{H}_{\mathcal{G}_5}^{1,2} S_{\mathcal{G}_5} + \mathbf{H}_{\mathcal{G}_6}^{1,2} S_{\mathcal{G}_6}+ \mathbf{H}_{\mathcal{G}_8}^{1,2} S_{\mathcal{G}_8} \notag\\
X_{2,1} &= W_{2,1} + \mathbf{H}_{\mathcal{G}_1}^{2,1}S_{\mathcal{G}_1} + \mathbf{H}_{\mathcal{G}_2}^{2,1}S_{\mathcal{G}_2}+ \mathbf{H}_{\mathcal{G}_3}^{2,1}S_{\mathcal{G}_3} +  \mathbf{H}_{\mathcal{G}_4}^{2,1}S_{\mathcal{G}_4}+ \notag \\
&\quad + \mathbf{H}_{\mathcal{G}_5}^{2,1} S_{\mathcal{G}_5} + \mathbf{H}_{\mathcal{G}_7}^{2,1}S_{\mathcal{G}_7}+ \mathbf{H}_{\mathcal{G}_8}^{2,1} S_{\mathcal{G}_8} \notag\\
X_{2,2} &= W_{2,2} + \mathbf{H}_{\mathcal{G}_1}^{2,2}S_{\mathcal{G}_1} + \mathbf{H}_{\mathcal{G}_2}^{2,2}S_{\mathcal{G}_2}+ \mathbf{H}_{\mathcal{G}_3}^{2,2}S_{\mathcal{G}_3} +  \mathbf{H}_{\mathcal{G}_4}^{2,2}S_{\mathcal{G}_4}+ \notag\\
&\quad + \mathbf{H}_{\mathcal{G}_6}^{2,2} S_{\mathcal{G}_6} + \mathbf{H}_{\mathcal{G}_7}^{2,2}S_{\mathcal{G}_7}+ \mathbf{H}_{\mathcal{G}_8}^{2,2} S_{\mathcal{G}_8} \notag\\
X_{3,1} &= W_{3,1} + \mathbf{H}_{\mathcal{G}_1}^{3,1}S_{\mathcal{G}_1} + \mathbf{H}_{\mathcal{G}_2}^{3,1}S_{\mathcal{G}_2}+ \mathbf{H}_{\mathcal{G}_3}^{3,1}S_{\mathcal{G}_3} +  \mathbf{H}_{\mathcal{G}_5}^{3,1}S_{\mathcal{G}_5}+ \notag \\
&\quad + \mathbf{H}_{\mathcal{G}_6}^{3,1} S_{\mathcal{G}_6} + \mathbf{H}_{\mathcal{G}_7}^{3,1}S_{\mathcal{G}_7}+ \mathbf{H}_{\mathcal{G}_8}^{3,1} S_{\mathcal{G}_8} \notag\\
X_{3,2} &= W_{3,2} + \mathbf{H}_{\mathcal{G}_1}^{3,2}S_{\mathcal{G}_1} + \mathbf{H}_{\mathcal{G}_2}^{3,2}S_{\mathcal{G}_2}+ \mathbf{H}_{\mathcal{G}_4}^{3,2}S_{\mathcal{G}_4} +  \mathbf{H}_{\mathcal{G}_5}^{3,2}S_{\mathcal{G}_5}+ \notag\\
&\quad + \mathbf{H}_{\mathcal{G}_6}^{3,2} S_{\mathcal{G}_6} + \mathbf{H}_{\mathcal{G}_7}^{3,2}S_{\mathcal{G}_7}+ \mathbf{H}_{\mathcal{G}_8}^{3,2}S_{\mathcal{G}_8} \notag\\
X_{4,1} &= W_{4,1} + \mathbf{H}_{\mathcal{G}_1}^{4,1}S_{\mathcal{G}_1} + \mathbf{H}_{\mathcal{G}_3}^{4,1}S_{\mathcal{G}_3}+ \mathbf{H}_{\mathcal{G}_4}^{4,1}S_{\mathcal{G}_4} +  \mathbf{H}_{\mathcal{G}_5}^{4,1}S_{\mathcal{G}_5}+ \notag\\
&\quad + \mathbf{H}_{\mathcal{G}_6}^{4,1} S_{\mathcal{G}_6} + \mathbf{H}_{\mathcal{G}_7}^{4,1}S_{\mathcal{G}_7}+ \mathbf{H}_{\mathcal{G}_8}^{4,1} S_{\mathcal{G}_8} \notag\\
X_{4,2} &= W_{4,2} + \mathbf{H}_{\mathcal{G}_2}^{4,2}S_{\mathcal{G}_2} + \mathbf{H}_{\mathcal{G}_3}^{4,2}S_{\mathcal{G}_3} + \mathbf{H}_{\mathcal{G}_4}^{4,2}S_{\mathcal{G}_4} + \mathbf{H}_{\mathcal{G}_5}^{4,2}S_{\mathcal{G}_5}+ \notag\\
&\quad + \mathbf{H}_{\mathcal{G}_6}^{4,2} S_{\mathcal{G}_6} + \mathbf{H}_{\mathcal{G}_7}^{4,2}S_{\mathcal{G}_7}+\mathbf{H}_{\mathcal{G}_8}^{4,2} S_{\mathcal{G}_8}
\label{MessageX1}
\end{align}
where each $\mathbf{H}_{\mathcal{G}_i}^{u,v} \in \mathbb{F}_q^{8 \times 3}$ denotes a precoding matrix used to embed the corresponding groupwise key $S_{\mathcal{G}_i}$ into the transmitted signals.

The messages received at each relay are then linearly combined to generate the relay transmissions. Specifically, each relay sums the messages from its associated users, yielding
\begin{equation}
\begin{aligned}
Y_1 =& X_{1,1} + X_{1,2} \\
    =&  W_{1,1} + W_{1,2} + (\mathbf{H}_{\mathcal{G}_1}^{1,1}+ \mathbf{H}_{\mathcal{G}_1}^{1,2})S_{\mathcal{G}_1} + (\mathbf{H}_{\mathcal{G}_2}^{1,1}+ \mathbf{H}_{\mathcal{G}_2}^{1,2})S_{\mathcal{G}_2} \\
    &+ (\mathbf{H}_{\mathcal{G}_3}^{1,1}+ \mathbf{H}_{\mathcal{G}_3}^{1,2})S_{\mathcal{G}_3} +  
    (\mathbf{H}_{\mathcal{G}_4}^{1,1}+ \mathbf{H}_{\mathcal{G}_4}^{1,2})S_{\mathcal{G}_4}+ (\mathbf{H}_{\mathcal{G}_5}^{1,1}+\\
    & \mathbf{H}_{\mathcal{G}_5}^{1,2}) S_{\mathcal{G}_5} + (\mathbf{H}_{\mathcal{G}_6}^{1,1}+ \mathbf{H}_{\mathcal{G}_6}^{1,2})S_{\mathcal{G}_6} + \mathbf{H}_{\mathcal{G}_7}^{1,1}S_{\mathcal{G}_7}+ \mathbf{H}_{\mathcal{G}_8}^{1,2} S_{\mathcal{G}_8}\\
Y_2 =& X_{2,1} + X_{2,2} \\
    =&  W_{2,1} + W_{2,2} + (\mathbf{H}_{\mathcal{G}_1}^{2,1}+ \mathbf{H}_{\mathcal{G}_1}^{2,2})S_{\mathcal{G}_1} + (\mathbf{H}_{\mathcal{G}_2}^{2,1}+ \mathbf{H}_{\mathcal{G}_2}^{2,2})S_{\mathcal{G}_2} \\
    &+ (\mathbf{H}_{\mathcal{G}_3}^{2,1}+ \mathbf{H}_{\mathcal{G}_3}^{2,2})S_{\mathcal{G}_3} +  
    (\mathbf{H}_{\mathcal{G}_4}^{2,1}+ \mathbf{H}_{\mathcal{G}_4}^{2,2})S_{\mathcal{G}_4}+ \mathbf{H}_{\mathcal{G}_5}^{2,1}S_{\mathcal{G}_5}\\
    &  +  \mathbf{H}_{\mathcal{G}_6}^{2,2}S_{\mathcal{G}_6} + (\mathbf{H}_{\mathcal{G}_7}^{2,1}+ \mathbf{H}_{\mathcal{G}_7}^{2,2})S_{\mathcal{G}_7}+ (\mathbf{H}_{\mathcal{G}_8}^{2,1}+ \mathbf{H}_{\mathcal{G}_8}^{2,2}) S_{\mathcal{G}_8}\\
Y_3 =& X_{3,1} + X_{3,2} \\
    =&  W_{3,1} + W_{3,2} + (\mathbf{H}_{\mathcal{G}_1}^{3,1}+ \mathbf{H}_{\mathcal{G}_1}^{3,2})S_{\mathcal{G}_1} + (\mathbf{H}_{\mathcal{G}_2}^{3,1}+ \mathbf{H}_{\mathcal{G}_2}^{3,2})S_{\mathcal{G}_2} \\
    &+ \mathbf{H}_{\mathcal{G}_3}^{3,1}S_{\mathcal{G}_3} + \mathbf{H}_{\mathcal{G}_4}^{3,2} S_{\mathcal{G}_4}+ (\mathbf{H}_{\mathcal{G}_5}^{3,1}+ \mathbf{H}_{\mathcal{G}_5}^{3,2})S_{\mathcal{G}_5}  +  (\mathbf{H}_{\mathcal{G}_6}^{3,1}+ \\ &\mathbf{H}_{\mathcal{G}_6}^{3,2})S_{\mathcal{G}_6} + (\mathbf{H}_{\mathcal{G}_7}^{3,1}+ \mathbf{H}_{\mathcal{G}_7}^{3,2})S_{\mathcal{G}_7}+ (\mathbf{H}_{\mathcal{G}_8}^{3,1}+ \mathbf{H}_{\mathcal{G}_8}^{3,2}) S_{\mathcal{G}_8}\\
Y_4 =& X_{4,1} + X_{4,2} \\
    =&  W_{4,1} + W_{4,2} + \mathbf{H}_{\mathcal{G}_1}^{4,1}S_{\mathcal{G}_1} +  \mathbf{H}_{\mathcal{G}_2}^{4,2}S_{\mathcal{G}_2} 
    +(\mathbf{H}_{\mathcal{G}_3}^{4,1}+\mathbf{H}_{\mathcal{G}_3}^{4,2})  \\
    &S_{\mathcal{G}_3}+(\mathbf{H}_{\mathcal{G}_4}^{4,1}+ \mathbf{H}_{\mathcal{G}_4}^{4,2}) S_{\mathcal{G}_4}+ (\mathbf{H}_{\mathcal{G}_5}^{4,1}+ \mathbf{H}_{\mathcal{G}_5}^{4,2})S_{\mathcal{G}_5}  +  (\mathbf{H}_{\mathcal{G}_6}^{4,1} \\ &+\mathbf{H}_{\mathcal{G}_6}^{4,2})S_{\mathcal{G}_6} + (\mathbf{H}_{\mathcal{G}_7}^{4,1}+ \mathbf{H}_{\mathcal{G}_7}^{4,2})S_{\mathcal{G}_7}+ (\mathbf{H}_{\mathcal{G}_8}^{4,1}+ \mathbf{H}_{\mathcal{G}_8}^{4,2}) S_{\mathcal{G}_8}
\end{aligned}
\label{messageY1}
\end{equation}

\textit{Decodability and construction:}
We now explain the construction of the precoding matrices $\{\mathbf{H}_{\mathcal{G}_i}^{u,v}\}$ in \eqref{MessageX1}. 
The design exploits a structured zero-sum property across users sharing each groupwise key.
For each group $\mathcal{G}_i$ with $|\mathcal{G}_i|=7$, the precoding matrices satisfy the following zero-sum condition:
\begin{equation}
\sum_{(u,v)\in \mathcal{G}_i} \mathbf{H}_{\mathcal{G}_i}^{u,v} = \mathbf{0}, \forall i\in [8].
\label{zerosum}
\end{equation}
That is, the contributions of the same groupwise key across all participating users cancel out when aggregated.
Specifically, for each $\mathcal{G}_i$, we randomly generate six $8\times 3$ matrices $\mathbf{H}_{\mathcal{G}_i}^{u,v}$, and set the remaining one as
\begin{equation}
\mathbf{H}_{\mathcal{G}_i}^{u',v'}
= - \sum_{(u,v)\in \mathcal{G}_i\setminus \{(u',v')\}} \mathbf{H}_{\mathcal{G}_i}^{u,v}, \forall~(u',v') \in \mathcal{G}_i.
\label{Co_constraint}
\end{equation}
Under this construction, the correctness condition is satisfied since
\begin{equation}
\sum_{u\in[4]} Y_u 
= \sum_{(u,v)\in[4]\times[2]} W_{u,v},
\end{equation}
as all groupwise key contributions vanish due to \eqref{zerosum}.

To ensure security, the matrices $\{\mathbf{H}_{\mathcal{G}_i}^{u,v}\}$ are required to be sufficiently generic. 
In the general analysis, we will show that selecting each $\mathbf{H}_{\mathcal{G}}$ from a sufficiently large design space guarantees that the induced linear system satisfies the required independence conditions. 
This can be achieved by increasing the blocklength $L$, which effectively corresponds to operating over a sufficiently large extension field while keeping the base field $\mathbb{F}_q$ fixed. 
In particular, the construction combines Vandermonde-type structures to obtain matrices that satisfy the desired properties.

For illustration, we provide one valid deterministic construction over $q=11$. 
The precoding matrices are constructed based on a Vandermonde-type structure.

Specifically, for each $i\in[1:8]$, we construct $\mathbf{H}_{\mathcal{G}_i}^{u,v} \in \mathbb{F}_q^{8\times 3}$ as
\begin{align}
\mathbf{H}_{\mathcal{G}_i}^{u,v} = 
\begin{bmatrix}
(b_{i,1})^{e_{u,v}} & (b_{i,2})^{e_{u,v}} & (b_{i,3})^{e_{u,v}} \\
(b_{i,1})^{e_{u,v}+1} & (b_{i,2})^{e_{u,v}+1} & (b_{i,3})^{e_{u,v}+1} \\
\vdots & \vdots & \vdots \\
(b_{i,1})^{e_{u,v}+7} & (b_{i,2})^{e_{u,v}+7} & (b_{i,3})^{e_{u,v}+7}
\end{bmatrix}.
\label{Construct_matrix}
\end{align}

If $(u,v)\notin \mathcal{G}_i$, we set $\mathbf{H}_{\mathcal G_i}^{u,v}=0_{8\times3}$. 
All operations are performed over the finite field $\mathbb{F}_{11}$. Let $g=2$ be a primitive element. 
For each $i\in[1:8]$, define three bases as
\begin{align}
b_{i,1} = g^{\,i-1}, \quad
b_{i,2} = g^{\,i+2}, \quad
b_{i,3} = g^{\,i+5} \pmod{10}.
\label{Element_set}
\end{align}
The modulo-$10$ operation ensures that the exponents are taken over the multiplicative group of $\mathbb{F}_{11}$.

For $(u, v) \in [4] \times [2]$, we assign the exponents as
\begin{align}
&e_{1,1} = 0,\ e_{1,2} = 4, 
e_{2,1} = 1,\ e_{2,2} = 5, \notag\\ 
&e_{3,1} = 2,\ e_{3,2} = 6, 
e_{4,1} = 3.
\label{Exponent_set}
\end{align}

With the above choice of bases and exponents, the matrices $\mathbf{H}_{\mathcal{G}_i}^{u,v}$ inherit a Vandermonde-type structure, which ensures sufficient linear independence required for security.

Finally, to satisfy the zero-sum constraint in \eqref{Co_constraint} when $|\mathcal{G}_i|=7$, we construct the remaining matrices as
\begin{align}
\mathbf{H}_{\mathcal G_1}^{4,1} &= -\sum_{u=1}^{3} \left( \mathbf{H}_{\mathcal G_1}^{u,1} + \mathbf{H}_{\mathcal G_1}^{u,2} \right),\notag\\
\mathbf{H}_{\mathcal G_i}^{4,2} &= -\sum_{(u,v)\in[4]\times[2]\setminus\{(4,2)\}} \mathbf{H}_{\mathcal G_i}^{u,v}, \quad i\in[2:8].
\label{Evidencedesign2}
\end{align}

This completes the construction of the precoding matrices $\mathbf{H}_{\mathcal{G}_i}^{u,v}$.

\textit{Relay Security:} 
From \eqref{Construct_matrix} to \eqref{Evidencedesign2}, the precoding design exhibits a crucial structural property. When $G = 7$, each user possesses $\binom{UV-1}{G-1} = 7$ groupwise keys, while the two users connected to each relay jointly possess $\binom{UV}{G} - \binom{(U-1)V}{G} = 8$ groupwise keys, covering all groupwise keys involved in the construction.

To ensure relay security in \eqref{RelaySecurity}, the input sequence $W_{[4]\times[2]}$ must be independent of the received message sequence $\{X_{u,v}\}_{v\in[2]}$. This is achieved by designing the precoding matrix to be full-rank, so that the embedded keys fully mask the input information. Since $8L_S=24 > 2L = 16$, the available key space is sufficient to match the dimension of the transmitted signals, ensuring perfect masking. Therefore, relay security is achieved.

We now formalize this argument for Relay 1. We have
\begin{align}
&I(W_{[4]\times[2]};X_{1,1},X_{1,2}) \notag\\
=& H(X_{1,1},X_{1,2}) - H(X_{1,1},X_{1,2} \mid W_{[4]\times[2]})  \label{eq:step112}\\
=& H(X_{1,1},X_{1,2}) \notag\\
&- H(W_{1,1}+Z_{1,1},W_{1,2}+Z_{1,2} \mid W_{[4]\times[2]})  \label{eq:step112-1}\\
=& H(X_{1,1},X_{1,2}) - H(Z_{1,1},Z_{1,2} \mid W_{[4]\times[2]})  \label{eq:step112-2}\\
\stackrel{\eqref{independent2}}{=}& H(X_{1,1},X_{1,2}) - H(Z_{1,1},Z_{1,2})  \label{eq:step112-3}\\
\stackrel{\eqref{MessageX1}}{=} & 2L -
H{
\left(
\underbrace{\begin{array}{c@{}c}
  \left[
  \begin{array}{cccc}
    \mathbf{H}_{\mathcal{G}_1}^{1,1} & \mathbf{H}_{\mathcal{G}_2}^{1,1}  & \mathbf{H}_{\mathcal{G}_3}^{1,1}  & \mathbf{H}_{\mathcal{G}_4}^{1,1}    \\[4pt]
   \mathbf{H}_{\mathcal{G}_1}^{1,2}  & \mathbf{H}_{\mathcal{G}_2}^{1,2} & \mathbf{H}_{\mathcal{G}_3}^{1,2} & \mathbf{H}_{\mathcal{G}_4}^{1,2}   \\
  \end{array}
  \right.
  \\[12pt]
  \left.
  \begin{array}{cccc}
   \mathbf{H}_{\mathcal{G}_5}^{1,1}  & \mathbf{H}_{\mathcal{G}_6}^{1,1}  & \mathbf{H}_{\mathcal{G}_7}^{1,1}  & 0 \\[4pt]
    \mathbf{H}_{\mathcal{G}_5}^{1,2} & \mathbf{H}_{\mathcal{G}_6}^{1,2} & 0 & \mathbf{H}_{\mathcal{G}_8}^{1,2}
  \end{array}
  \right]
\end{array}}_{\triangleq \widehat{\mathbf{H}}_{16 \times 24}}
\begin{bmatrix}
S_{\mathcal{G}_1} \\
S_{\mathcal{G}_2} \\
S_{\mathcal{G}_3} \\
S_{\mathcal{G}_4} \\
S_{\mathcal{G}_5} \\
S_{\mathcal{G}_6} \\
S_{\mathcal{G}_7} \\
S_{\mathcal{G}_8} 
\end{bmatrix}
\right)
}  \label{eq:step113}\\
= & 2\times 8 - 16=0.  \label{eq:step114}
\end{align}
The second term in \eqref{eq:step112-3} follows from the independence between the input and key variables as stated in \eqref{independent2}. In \eqref{eq:step113}, the entropy term corresponds to the precoded groupwise keys constructed in \eqref{MessageX1}. Since the precoding matrix $\widehat{\mathbf{H}}_{16 \times 24}$ has full row rank (as shown in \eqref{Construct_matrix} to \eqref{Evidencedesign2}), the entropy equals $16$ under the independence of the groupwise keys. The final step follows from the independence of groupwise keys (see \eqref{gruopwisesize}). Therefore, Relay $1$ satisfies the security constraint. By symmetry, all other relays also satisfy security. Next, we proceed to the server security analysis.

We now proceed to server security.

\textit{Server Security:} 
We now analyze the server security condition in \eqref{ServerSecurity}. When $G = 7$, there are $\binom{UV}{G} = 8$ groupwise keys in total. These keys are all effective in protecting the second-hop transmission since no groupwise key is confined within a single relay.
Since the server can recover $\sum_{(u,v)\in[4]\times[2]} W_{u,v}$, which contains $L$ symbols, only the remaining uncertainty needs to be protected. Given that $8L_S = 3L = 24$, the available randomness is sufficient to guarantee server security.

We now formalize this intuition. We have
\begin{align}
& I\bigg(W_{[4]\times[2]};\{Y_u\}_{u\in[4]} \bigg| \sum_{(u,v)\in[4]\times[2]} W_{u,v}\bigg) \notag\\
=& H\bigg(\{Y_u\}_{u\in[4]} \bigg| \sum_{(u,v)\in[4]\times[2]} W_{u,v}\bigg) - 
H\bigg(\{Y_u\}_{u\in[4]} \bigg| \notag\\
&\sum_{(u,v)\in[4]\times[2]} W_{u,v},W_{[4]\times[2]}\bigg)
  \label{eq:step30}\\
=& H\bigg(\{X_{u,1}+X_{u,2}\}_{u\in[4]} \bigg| \sum_{(u,v)\in[4]\times[2]} W_{u,v}\bigg)\notag\\
 &-H\bigg(\{X_{u,1}+X_{u,2}\}_{u\in[4]} \bigg| W_{[4]\times[2]}\bigg)
  \label{eq:step30-1}\\  
=& H\bigg(\bigg\{\sum_{v\in[2]}(W_{u,v}+Z_{u,v})\bigg\}_{u\in[4]} \bigg| \sum_{(u,v)\in[4]\times[2]} W_{u,v}\bigg) \notag\\
&- H\bigg(\bigg\{\sum_{v\in[2]}(W_{u,v}+Z_{u,v})\bigg\}_{u\in[4]} \bigg|  W_{[4]\times[2]}\bigg)
  \label{eq:step30-2}\\ 
=& H\bigg(\bigg\{\sum_{v\in[2]}(W_{u,v}+Z_{u,v})\bigg\}_{u\in[4]} , \sum_{(u,v)\in[4]\times[2]} W_{u,v}\bigg)   \notag\\
& -H\bigg( \sum_{(u,v)\in[4]\times[2]} W_{u,v}\bigg) \notag\\
&-H\bigg(\{Z_{u,1}+Z_{u,2}\}_{u\in[4]} \bigg| 
W_{[4]\times[2]}\bigg)
  \label{eq:step30-3}\\ 
\stackrel{\eqref{independent2}}{=}&  H(\{W_{u,1}+W_{u,2}+Z_{u,1}+Z_{u,2}\}_{u\in[4]} ) - \notag \\
&H\bigg( \sum_{(u,v)\in[4]\times[2]} W_{u,v}\bigg)-
H(\{Z_{u,1}+Z_{u,2}\}_{u\in[3]} )
  \label{eq:step30-4}\\ 
\stackrel{\eqref{MessageX1}}{\leq}& 4\times 8 - 8 - H\left(
  {
   \widehat{\mathbf{H}}_{24 \times 24}}
  \begin{bmatrix}
    S_{\mathcal{G}_1} \\
    S_{\mathcal{G}_2} \\
    S_{\mathcal{G}_3} \\
    S_{\mathcal{G}_4} \\
    S_{\mathcal{G}_5} \\
    S_{\mathcal{G}_6} \\
    S_{\mathcal{G}_7} \\ 
    S_{\mathcal{G}_8} 
  \end{bmatrix}
\right)  
\label{eq:step31} \\
=& 24 - 24 = 0 \label{eq:step33}
\end{align}

\begin{figure*}[tb]         
    \hrulefill             
    \vspace{8pt}           
    \begin{align}
    &\widehat{\mathbf{H}}_{24 \times 24\notag}\\
    =&
\begin{bmatrix}
\mathbf{H}_{\mathcal{G}_1}^{1,1}+\mathbf{H}_{\mathcal{G}_1}^{1,2} & \mathbf{H}_{\mathcal{G}_2}^{1,1}+\mathbf{H}_{\mathcal{G}_2}^{1,2} & \mathbf{H}_{\mathcal{G}_3}^{1,1}+\mathbf{H}_{\mathcal{G}_3}^{1,2} & \mathbf{H}_{\mathcal{G}_4}^{1,1}+\mathbf{H}_{\mathcal{G}_4}^{1,2} & \mathbf{H}_{\mathcal{G}_5}^{1,1}+\mathbf{H}_{\mathcal{G}_5}^{1,2} & \mathbf{H}_{\mathcal{G}_6}^{1,1}+\mathbf{H}_{\mathcal{G}_6}^{1,2} &
\mathbf{H}_{\mathcal{G}_7}^{1,1} & \mathbf{H}_{\mathcal{G}_8}^{1,2}\\[4pt]
\mathbf{H}_{\mathcal{G}_1}^{2,1}+\mathbf{H}_{\mathcal{G}_1}^{2,2} & \mathbf{H}_{\mathcal{G}_2}^{2,1}+\mathbf{H}_{\mathcal{G}_2}^{2,2} & \mathbf{H}_{\mathcal{G}_3}^{2,1}+\mathbf{H}_{\mathcal{G}_3}^{2,2} & \mathbf{H}_{\mathcal{G}_4}^{2,1}+\mathbf{H}_{\mathcal{G}_4}^{2,2} & \mathbf{H}_{\mathcal{G}_5}^{2,1} & \mathbf{H}_{\mathcal{G}_6}^{2,2} &
\mathbf{H}_{\mathcal{G}_7}^{2,1}+\mathbf{H}_{\mathcal{G}_7}^{2,2} & \mathbf{H}_{\mathcal{G}_8}^{2,1}+\mathbf{H}_{\mathcal{G}_8}^{2,2}\\[4pt]
\mathbf{H}_{\mathcal{G}_1}^{3,1}+\mathbf{H}_{\mathcal{G}_1}^{3,2} & \mathbf{H}_{\mathcal{G}_2}^{3,1}+\mathbf{H}_{\mathcal{G}_2}^{3,2} & \mathbf{H}_{\mathcal{G}_3}^{3,1} & \mathbf{H}_{\mathcal{G}_4}^{3,2} & \mathbf{H}_{\mathcal{G}_5}^{3,1}+\mathbf{H}_{\mathcal{G}_5}^{3,2} & \mathbf{H}_{\mathcal{G}_6}^{3,1}+\mathbf{H}_{\mathcal{G}_6}^{3,2} &
\mathbf{H}_{\mathcal{G}_7}^{3,1}+\mathbf{H}_{\mathcal{G}_7}^{3,2} & \mathbf{H}_{\mathcal{G}_8}^{3,1}+\mathbf{H}_{\mathcal{G}_8}^{3,2}
    \end{bmatrix}
\label{matrix_H4}
\end{align}
\end{figure*}


Where the matrix $\widehat{\mathbf{H}}_{24 \times 24}$ is given in \eqref{matrix_H4}, shown at the top of the next page.
In \eqref{eq:step30-4}, the first term follows from the correctness constraint in \eqref{Correctness}, which ensures that the sum $\sum_{(u,v)\in [4]\times[2]}W_{u,v}$ is a deterministic function of $\big\{\sum_{v\in[2]}(W_{u,v}+Z_{u,v})\big\}_{u\in[4]}$.
The third term in \eqref{eq:step30-4} follows from the independence between the input and key variables, as stated in \eqref{independent2}. Moreover, due to the zero-sum structure in the key design, we have
$\sum_{u\in[4]} (Z_{u,1} + Z_{u,2}) = 0,$
which implies that the four random variables $\{Z_{u,1}+Z_{u,2}\}_{u\in[4]}$ are linearly dependent, i.e., they span a subspace of dimension 3.

The third term in \eqref{eq:step31} corresponds to the entropy of the effective precoded groupwise keys as defined in \eqref{MessageX1}. Since the rank of the effective precoding matrix $\widehat{\mathbf{H}}_{24 \times 24}$ is $24$ (as shown in \eqref{Construct_matrix} to \eqref{Evidencedesign2}) and all groupwise keys are mutually independent, this term equals $24$.

Consequently, the mutual information in \eqref{eq:step33} is zero, which establishes the server security constraint.

\subsection{General Achievability for Arbitrary $U$, $V$, and $G$}
\label{sec:GENERAL SCHEME}

In this section, we present a general achievable scheme for arbitrary $U$, $V$, and $G$. 
We consider two regimes depending on which security constraint is dominant.

When $\frac{V}{\binom{UV}{G} - \binom{(U-1)V}{G}} 
\ge 
\frac{U - 1}{\binom{UV}{G} - U \binom{V}{G}}$, 
we set $L = L_X = L_Y = V$ and 
$L_S = \big[ \binom{UV}{G} - \binom{(U-1)V}{G} \big] $.

When $\frac{V}{\binom{UV}{G} - \binom{(U-1)V}{G}} 
< 
\frac{U - 1}{\binom{UV}{G} - U \binom{V}{G}}$, 
we set $L = L_X = L_Y = U-1$ and 
$L_S = \big[ \binom{UV}{G} - U \binom{V}{G} \big] $.

We set $W_{u,v}$ as $L$ symbols over the sufficiently large finite field $\mathbb{F}_{q}$, i.e., 
$W_{u,v} \in \mathbb{F}_{q}^{L \times 1}$ and 
$S_{\mathcal{G}} \in \mathbb{F}_{q}^{L_S \times 1}$.

Each user $(u,v)$ transmits
\begin{align}
    X_{u,v} 
    = W_{u,v} 
    + \sum_{\substack{\mathcal{G}: (u,v) \in \mathcal{G}}} 
    \mathbf{H}_{\mathcal{G}}^{u,v} S_\mathcal{G},
    \quad \forall (u,v) \in [U]\times[V].
    \label{Correctness5}
\end{align}

Here, $\mathbf{H}_{\mathcal{G}}^{u,v} \in \mathbb{F}_{q}^{L \times L_S}$ is a precoding matrix satisfying the zero-sum constraint
\begin{align}
    \sum_{(u,v)\in \mathcal{G}} 
    \mathbf{H}_{\mathcal{G}}^{u,v} 
    = 0,
    \quad \forall \mathcal{G} \in \binom{[U]\times[V]}{G}.
    \label{Correctness6}
\end{align}
If $(u,v)\notin \mathcal{G}$, we set $\mathbf{H}_{\mathcal{G}}^{u,v} = 0$.

After receiving messages from its connected users, relay $u$ stacks them as
\begin{align}
\begin{bmatrix}
    X_{u,1} \\
    \vdots \\
    X_{u,V}
\end{bmatrix}
=
\begin{bmatrix}
    W_{u,1} \\
    \vdots \\
    W_{u,V}
\end{bmatrix}
+ \mathbf{H}_1 \mathbf{S}_1,
\quad \forall u \in [U],
\label{Correctness7}
\end{align}

where
 \begin{align}
   &\mathbf{H}_1 \notag\\
   =& \left[ \left\{\mathbf{H}_{\mathcal{G}}^{u,v}\right\}_{\substack{v \in [V], \mathcal{G} \in \binom{[U] \times [V]}{G},\mathcal{G} \notin \binom{[U]\setminus\{u\} \times [V]}{G}}} \right]   \label{Correctness8}\\
  \triangleq& 
  \begin{array}{c@{}c}
   \left[
   \begin{array}{cccc}
     \mathbf{H}_{(1,1),\cdots,(u,1),\cdots,(\left\lceil G/V \right\rceil, \langle G \rangle_V )}^{u,1}, & \cdots      \\
     \vdots\\
  \mathbf{H}_{(1,1),\cdots,(u,1),\cdots,(\left\lceil G/V \right\rceil,\langle G \rangle_V)}^{u,V}, & \cdots  \\
   \end{array}
   \right.
   \end{array}
  \notag\\
   &\begin{array}{c@{}c}
   \left.
   \begin{array}{cccc}
    \cdots & ,\mathbf{H}_{(U-\left\lceil G/V-1 \right\rceil ,V-\langle G \rangle_V+1),\cdots,(u,V),\cdots,(U,V)}^{u,1} \\
    \vdots\\
     \cdots & ,\mathbf{H}_{(U-\left\lceil G/V-1 \right\rceil ,V-\langle G \rangle_V+1),\cdots,(u,V),\cdots,(U,V)}^{u,V}\\
   \end{array}
   \right]
 \end{array}
  \label{matrix_H5} 
 \end{align}

and
\begin{align}
 \mathbf{S}_1 =
     \begin{bmatrix}
     S_{(1,1),\cdots,(u,1),\cdots,(\left\lceil G/V \right\rceil, \langle G \rangle_V)} \\
     \vdots \\
     S_{(U-\left\lceil G/V-1 \right\rceil ,V-\langle G \rangle_V+1),\cdots,(u,V),\cdots,(U,V)}
 \end{bmatrix}. 
 \label{MatrixS1}
\end{align}

Here, $\mathbf{S}_1$ collects all groupwise keys that involve at least one user in relay $u$, i.e.,
$\mathbf{S}_1 = \{ S_{\mathcal{G}} : \mathcal{G} \cap (\{u\}\times[V]) \neq \emptyset \}$.
Accordingly, $\mathbf{H}_1$ is formed by stacking the corresponding $\mathbf{H}_{\mathcal{G}}^{u,v}$ blocks. Therefore, $\mathbf{H}_1$ contains $V$ row blocks and $\binom{UV}{G} - \binom{(U-1)V}{G}$ column blocks.

Each relay then transmits
\begin{align}
    Y_u = \sum_{v\in[V]} X_{u,v}, 
    \quad \forall u \in [U].
    \label{ConstructY}
\end{align}

Stacking all relay transmissions, we obtain
\begin{align}
\begin{bmatrix}
    Y_1 \\
    \vdots \\
    Y_U
\end{bmatrix}
=
\begin{bmatrix}
    \sum_{v\in[V]} W_{1,v} \\
    \vdots \\
    \sum_{v\in[V]} W_{U,v}
\end{bmatrix}
+ \mathbf{H}_2 \mathbf{S}_2,
\label{ServerSecurity0}
\end{align}
where $\mathbf{S}_2$ denotes the effective groupwise keys after relay aggregation, and $\mathbf{H}_2$ is the corresponding precoding matrix.

More specifically, $\mathbf{H}_2$ is constructed by stacking all precoding matrices $\mathbf{H}_{\mathcal{G}}^{u,v}$ associated with groupwise keys, i.e.,
\begin{align}
\mathbf{H}_2 =
\left[ \{\mathbf{H}_{\mathcal{G}}^{u,v}\}_{(u,v)\in[U]\times[V],\ \mathcal{G}\in\binom{[U]\times[V]}{G}} \right],
\label{ServerSecurity10}
\end{align}
\begin{figure*}[ht]         
    \hrulefill             
    \vspace{8pt}           
    \begin{align}
    {\mathbf{H}}_{2}= &
\begin{array}{c@{}c}
  \left[
  \begin{array}{cccc}
      \sum_{\mathcal{G}\in (1,1),\cdots,(\left\lceil G/V \right\rceil,\langle G \rangle_V)}\mathbf{H}_{(1,1),\cdots,(\left\lceil G/V \right\rceil,\langle G \rangle_V)}^{(u,v)\in{\mathcal{M}_1}}, & \cdots      \\
    \vdots\\
\sum_{\mathcal{G}\in (1,1),\cdots,(\left\lceil G/V \right\rceil,\langle G \rangle_V)}\mathbf{H}_{(1,1),\cdots,(\left\lceil G/V \right\rceil,\langle G \rangle_V)}^{(u,v)\in{\mathcal{M}_U}}, & \cdots  \\
  \end{array}
  \right.
  \end{array}
  \notag\\
  &\begin{array}{c@{}c}
  \left.
  \begin{array}{cccc}
   \cdots & ,\sum_{\mathcal{G}\in(U-\left\lceil G/V-1 \right\rceil ,V-\langle G \rangle_V+1),\cdots,(U,V)}\mathbf{H}_{(U-\left\lceil G/V-1 \right\rceil,V-\langle G \rangle_V+1),\cdots,(U,V)}^{(u,v)\in{\mathcal{M}_1}} \\
   \vdots\\
    \cdots & ,\sum_{\mathcal{G}\in (U-\left\lceil G/V-1 \right\rceil ,V-\langle G \rangle_V+1),\cdots,(U,V)} \mathbf{H}_{(U-\left\lceil G/V-1 \right\rceil ,V-\langle G \rangle_V+1),\cdots,(U,V)}^{(u,v)\in{\mathcal{M}_U}}\\
  \end{array}
  \right]
\end{array}
\label{matrix_H7}
\end{align}
\end{figure*}
and $\mathbf{H}_{\mathcal{G}}^{u,v} = 0$ if $(u,v)\notin\mathcal{G}$ is given by \eqref{matrix_H7}, shown at the top of the next page.
The vector $\mathbf{S}_2$ collects all independent groupwise keys $\{S_{\mathcal{G}}\}_{\mathcal{G}\in\binom{[U]\times[V]}{G}}$, i.e.,
\begin{eqnarray}
\mathbf{S}_2=
\begin{bmatrix}
    S_{(1,1),\cdots,(\left\lceil G/V \right\rceil,{\langle G \rangle_V})} \\
    \vdots \\
    S_{(U-\left\lceil G/V-1 \right\rceil ,V-\langle G \rangle_V+1),\cdots,(U,V)}
\end{bmatrix}
\label{MatrixS2_2}
\end{eqnarray}

It is worth noting that when $1 < G \le V$, some groupwise keys are shared only among users within the same relay cluster. In this case, $\mathbf{H}_2$ admits a natural block-wise structure, as illustrated in \eqref{matrix_H7}. Accordingly, $\mathbf{S}_2$ contains all $\binom{UV}{G}$ groupwise keys.

The correctness follows directly from the zero-sum property in \eqref{Correctness6}. Specifically,
\begin{align}
&\sum_{u \in [U]} Y_u\notag\\
=& \sum_{(u,v)\in[U]\times[V]} W_{u,v}
+ \sum_{\mathcal{G}\in\binom{[U]\times[V]}{G}}
\left( \sum_{(u,v)\in\mathcal{G}} \mathbf{H}_{\mathcal{G}}^{u,v} \right) S_{\mathcal{G}} \notag\\
\stackrel{\eqref{Correctness6}}{=}& \sum_{(u,v)\in[U]\times[V]} W_{u,v}.
\label{ServerSecurity5}
\end{align}

This shows that all groupwise key contributions cancel out exactly, ensuring correct aggregation.

To establish security, we next analyze the rank properties of the effective precoding matrices $\mathbf{H}_1$ and $\mathbf{H}_2$, which characterize the linear mixing of the groupwise keys at the relay and server sides, respectively. These rank conditions guarantee that the security constraints at both the relay and the server sides are satisfied.

In particular, the security condition can be reduced to a rank requirement on the corresponding precoding matrices. We formalize this relationship in the following lemmas.

\begin{lemma}\label{lemma:1}
For any relay $u\in[U]$, the scheme \eqref{Correctness5} satisfies the security constraint \eqref{RelaySecurity} if and only if $\mathrm{rank}(\mathbf{H}_1) = VL$ over $\mathbb{F}_{q}$.
\end{lemma}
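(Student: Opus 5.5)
We reduce the relay security condition \eqref{RelaySecurity} to a statement about the entropy of a linear image of uniform keys, following the same chain used for Relay~$1$ in the two examples. Fix a relay $u$ and write $\mathbf{X}_u \triangleq [X_{u,1};\cdots;X_{u,V}]$ and $\mathbf{W}_u \triangleq [W_{u,1};\cdots;W_{u,V}]$. By \eqref{Correctness7}, $\mathbf{X}_u = \mathbf{W}_u + \mathbf{H}_1\mathbf{S}_1$. We expand
\begin{align*}
I\big(\mathbf{X}_u; W_{[U]\times[V]}\big)
&= H(\mathbf{X}_u) - H\big(\mathbf{W}_u+\mathbf{H}_1\mathbf{S}_1 \,\big|\, W_{[U]\times[V]}\big)\\
&= H(\mathbf{X}_u) - H(\mathbf{H}_1\mathbf{S}_1),
\end{align*}
where the last step subtracts the known $\mathbf{W}_u$ and uses the independence of inputs and keys in \eqref{independent2}. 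Since $\mathbf{X}_u \in \mathbb{F}_q^{VL}$, we have $H(\mathbf{X}_u)\le VL$ (in $q$-ary units).

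The key tool is the following standard fact: if $\mathbf{S}$ is uniform over $\mathbb{F}_q^{n}$ and $\mathbf{A}$ is a deterministic matrix, then $\mathbf{A}\mathbf{S}$ is uniform on the column space of $\mathbf{A}$. Consequently $H(\mathbf{A}\mathbf{S}) = \mathrm{rank}(\mathbf{A})$. This applies here because the entries of $\mathbf{S}_1$ are i.i.d.\ uniform, by the mutual independence of the groupwise keys in \eqref{gruopwisesize}.

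\emph{Sufficiency.} Suppose $\mathrm{rank}(\mathbf{H}_1)=VL$. Then $\mathbf{H}_1\mathbf{S}_1$ is uniform on all of $\mathbb{F}_q^{VL}$ and independent of $W_{[U]\times[V]}$. It follows that $\mathbf{X}_u$ is uniform on $\mathbb{F}_q^{VL}$ conditioned on any realization of the inputs. Hence $H(\mathbf{X}_u)=VL$ and also $H(\mathbf{X}_u\mid W)=VL$, so the mutual information is $VL-VL=0$.

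\emph{Necessity.} Suppose $r\triangleq\mathrm{rank}(\mathbf{H}_1)<VL$. Then $H(\mathbf{X}_u\mid W)=r$. Because $\mathbf{W}_u$ is uniform on $\mathbb{F}_q^{VL}$ and independent of $\mathbf{H}_1\mathbf{S}_1$, the sum $\mathbf{X}_u$ is uniform on $\mathbb{F}_q^{VL}$, so $H(\mathbf{X}_u)=VL$. This gives $I=VL-r>0$, violating \eqref{RelaySecurity}.

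The only delicate point is justifying $H(\mathbf{X}_u)=VL$ exactly in the necessity direction. An upper bound alone would not suffice there, and the equality rests on the uniformity of $\mathbf{W}_u$ together with its independence from the keys. The fact $H(\mathbf{A}\mathbf{S})=\mathrm{rank}(\mathbf{A})$ is elementary: every fiber of the map $\mathbf{s}\mapsto\mathbf{A}\mathbf{s}$ is a coset of $\ker\mathbf{A}$, so all fibers have equal size. Finally, the argument uses nothing about $u$ beyond \eqref{Correctness7}, so it holds for every relay.
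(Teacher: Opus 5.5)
Your proof is correct and follows the paper's route: the same reduction $I(\mathbf{X}_u;W_{[U]\times[V]}) = H(\mathbf{X}_u) - H(\mathbf{H}_1\mathbf{S}_1)$ via \eqref{Correctness7} and \eqref{independent2}, with $H(\mathbf{H}_1\mathbf{S}_1)=\mathrm{rank}(\mathbf{H}_1)$ for uniform i.i.d.\ key symbols. Your necessity step improves on the paper's brief ``all inequalities must be tight'' remark. You show $H(\mathbf{X}_u)=VL$ exactly from the uniformity of $\mathbf{W}_u$ and its independence from the keys, which gives $I = VL-\mathrm{rank}(\mathbf{H}_1)$ and hence both directions at once.
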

\begin{proof}
Consider the relay security constraint \eqref{RelaySecurity}.
\begin{align}
   & I\left( \{W_{u,v}\}_{(u,v)\in[U]\times[V]} ; \{X_{u,v}\}_{v \in [V]}  \right)   \label{Relayserver2}\\
    =& H\left( \{X_{u,v}\}_{v\in [V]} \right)\notag\\
    &-H\left(\{X_{u,v}\}_{v\in [V]} \mid  \{W_{u,v}\}_{(u,v)\in[U]\times[V]} \right)\\
\stackrel{\eqref{Correctness7}}{=}&  H\left( \{X_{u,v}\}_{v\in [V]} \right) -H\bigl(\mathbf{H}_1\mathbf{S}_1 \mid    \{W_{u,v}\}_{(u,v)\in[U]\times[V]} \bigr) \label{Relayserver3}\\
\stackrel{\eqref{gruopwisesize}}{=}& H\left( \{X_{u,v}\}_{v\in [V]} \right) -H\left(\mathbf{H}_1\mathbf{S}_1 \right)
    \label{Relayserver4}\\
    =& VL - VL=0
    \label{Relayserver5}
\end{align}
The second term in \eqref{Relayserver3} follows from the construction of $\{X_{u,v}\}_{v\in[V]}$ as described in \eqref{Correctness7}. Step \eqref{Relayserver4} follows from the independence between the inputs and the groupwise keys in \eqref{gruopwisesize}. In \eqref{Relayserver5}, when $\{X_{u,v}\}_{v\in[V]}$ is uniformly distributed, the entropy attains its maximum value (measured in $q$-ary units). In the final step, we assume that the rank of matrix $\mathbf{H}_1$ over $\mathbb{F}_{q}$ is $VL$ and $S_\mathcal{G}$ symbols are i.i.d. and uniform.

For any scheme of the form \eqref{Correctness5}, if \eqref{Relayserver2} equals zero, then all the above inequalities must be tight, which implies that the matrix $\mathbf{H}_1$ must have rank $VL$.
\end{proof}

\begin{lemma}\label{lemma:2}
For any user $(u,v)\in[U]\times[V]$, the scheme \eqref{ConstructY} satisfies the security constraint \eqref{ServerSecurity} if and only if $\mathrm{rank}(\mathbf{H}_2) = (U-1)L$ over $\mathbb{F}_{q}$.
\end{lemma}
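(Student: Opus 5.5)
The plan mirrors the proof of Lemma~\ref{lemma:1}, now at the server side. Write $W_\Sigma \triangleq \sum_{(u,v)\in[U]\times[V]} W_{u,v}$ and expand the conditional mutual information in \eqref{ServerSecurity} as
\begin{align*}
I\big(\{Y_u\}_{u\in[U]}; W_{[U]\times[V]} \mid W_\Sigma\big)
= H\big(\{Y_u\}_{u\in[U]}\big) - H(W_\Sigma) - H\big(\mathbf{H}_2\mathbf{S}_2\big).
\end{align*}

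\emph{First term.} Because $W_\Sigma$ is a deterministic function of $\{Y_u\}_{u\in[U]}$ by \eqref{ServerSecurity5}, we have $H(\{Y_u\}, W_\Sigma) = H(\{Y_u\})$. Moreover, the $Y_u$ are linearly dependent: the $U$ blocks sum to $W_\Sigma$. Hence $(Y_1,\dots,Y_{U-1})$ together with $W_\Sigma$ determines everything, and $H(\{Y_u\}) \le UL$.

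\emph{Last term.} Conditioned on $W_{[U]\times[V]}$, the stacked relay outputs in \eqref{ServerSecurity0} reduce to $\mathbf{H}_2\mathbf{S}_2$. Since the keys are independent of the inputs by \eqref{gruopwisesize}, this conditional entropy equals $H(\mathbf{H}_2\mathbf{S}_2)$. Because $\mathbf{S}_2$ consists of i.i.d.\ uniform symbols, $H(\mathbf{H}_2\mathbf{S}_2) = \mathrm{rank}(\mathbf{H}_2)$ in $q$-ary units.

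\emph{Column-space bound.} By the zero-sum constraint \eqref{Correctness6}, the $U$ block rows of $\mathbf{H}_2$ sum to zero. The column space of $\mathbf{H}_2$ therefore lies in the $(U-1)L$-dimensional subspace $\{(a_1,\dots,a_U) : \sum_u a_u = 0\}$. This gives $\mathrm{rank}(\mathbf{H}_2) \le (U-1)L$ for every scheme of this form.

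\emph{Sufficiency.} Suppose $\mathrm{rank}(\mathbf{H}_2) = (U-1)L$. Then $\mathbf{H}_2\mathbf{S}_2$ is uniform on that zero-sum subspace and independent of the inputs. Adding the vector of relay partial sums $\big(\sum_v W_{1,v}, \dots, \sum_v W_{U,v}\big)$ yields a vector whose coordinate sum is $W_\Sigma$ and which is otherwise uniform, so $H(\{Y_u\}) = L + (U-1)L = UL$. Substituting, the mutual information equals $UL - L - (U-1)L = 0$.

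\emph{Necessity.} Suppose the mutual information is zero. Combining with $H(\{Y_u\}) \le UL$ gives $\mathrm{rank}(\mathbf{H}_2) \ge H(\{Y_u\}) - L$. To close this, write $H(\{Y_u\}) = L + H(\{Y_u\} \mid W_\Sigma)$ and lower bound the second term by $(U-1)L$. The argument uses that the partial sums $\big(\sum_v W_{u,v}\big)_{u\in[U]}$ have entropy $(U-1)L$ given $W_\Sigma$. Zero leakage forces $H(\{Y_u\} \mid W_\Sigma) \ge H\big(\{\sum_v W_{u,v}\}_u \mid W_\Sigma\big) = (U-1)L$. Together with the column-space bound, this forces $\mathrm{rank}(\mathbf{H}_2) = (U-1)L$.

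The main obstacle is making this necessity direction rigorous. The plan is to read it off the chain of equalities as in Lemma~\ref{lemma:1}, where every inequality (in particular $H(\{Y_u\}) \le UL$) must be tight when the leakage is zero. I would also note that the uniformity of the inputs, from \eqref{inputsize} and \eqref{independent}, is what makes the first-term bound exactly $UL$.
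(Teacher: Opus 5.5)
Your expansion $I = H(\{Y_u\}_{u\in[U]}) - L - \mathrm{rank}(\mathbf{H}_2)$, the column-space bound $\mathrm{rank}(\mathbf{H}_2)\le (U-1)L$, and the sufficiency direction are correct, and your overall structure is the paper's. The gap is in necessity. Write $\mathbf{P}=(\sum_v W_{1,v},\dots,\sum_v W_{U,v})$ for the vector of relay partial sums.

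Your step ``zero leakage forces $H(\{Y_u\}_{u\in[U]}\mid W_\Sigma)\ge H(\mathbf{P}\mid W_\Sigma)$'' comes with no argument, and as phrased it is circular. When the leakage is zero, $H(\{Y_u\}_{u\in[U]}\mid W_\Sigma)=H(\{Y_u\}_{u\in[U]}\mid W_{[U]\times[V]})=\mathrm{rank}(\mathbf{H}_2)$. So the inequality you invoke is exactly the conclusion $\mathrm{rank}(\mathbf{H}_2)\ge (U-1)L$.

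Your fallback, that ``every inequality must be tight when the leakage is zero,'' also fails. The chain only yields $0=I\le (U-1)L-\mathrm{rank}(\mathbf{H}_2)$, which is the upper bound on the rank you already have. By itself, $I=0$ only says $H(\{Y_u\}_{u\in[U]})=L+\mathrm{rank}(\mathbf{H}_2)$. That is a priori compatible with $H(\{Y_u\}_{u\in[U]})<UL$ and a rank deficit. The paper's proof closes necessity with the same ``all inequalities must be tight'' sentence, so mirroring it does not fill the hole.

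Either of two short facts closes it.

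First, $H(\{Y_u\}_{u\in[U]})=UL$ holds for \emph{every} $\mathbf{H}_2$, not only under zero leakage. The vector $\mathbf{P}$ is uniform on $\mathbb{F}_q^{UL}$, since each coordinate sums independent uniform inputs over disjoint user sets. It is also independent of $\mathbf{H}_2\mathbf{S}_2$, so $\mathbf{P}+\mathbf{H}_2\mathbf{S}_2$ is uniform. Hence $I=(U-1)L-\mathrm{rank}(\mathbf{H}_2)$ identically, and both directions follow at once. Your closing remark about input uniformity points here, but it must be stated as an unconditional identity, not as something zero leakage forces.

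Alternatively, use zero leakage through a one-time-pad bound. Since $\mathbf{P}$ is a function of $W_{[U]\times[V]}$ and $\mathbf{P}=(Y_1,\dots,Y_U)-\mathbf{H}_2\mathbf{S}_2$, $I=0$ gives
\[
(U-1)L = H(\mathbf{P}\mid W_\Sigma) = H(\mathbf{P}\mid \{Y_u\}_{u\in[U]}, W_\Sigma) \le H(\mathbf{H}_2\mathbf{S}_2\mid \{Y_u\}_{u\in[U]}, W_\Sigma) \le \mathrm{rank}(\mathbf{H}_2).
\]
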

\begin{proof}
Consider the server security constraint \eqref{ServerSecurity}.
\begin{align}
    & I\bigg(\{W_{u,v}\}_{(u,v)\in[U]\times[V]}; \{Y_u\}_{u \in [U]} \bigg| \sum_{(u,v)\in[U]\times [V]}{{W}_{u,v}}\bigg)   \notag\\
    =& H\bigg( \{Y_{u}\}_{u \in [U]} \bigg|  \sum_{(u,v)\in[U]\times [V]}{{W}_{u,v}} \bigg) \notag\\
    & - H\bigg( \{Y_{u}\}_{u \in [U]} \bigg| \{W_{u,v}\}_{(u,v) \in [U] \times [V] } \bigg)\label{Rates4}\\
     \stackrel{\eqref{ConstructY}}{=}& H\bigg( \{Y_{u}\}_{u \in [U]} ,  \sum_{(u,v)\in[U]\times [V]}{{W}_{u,v}} \bigg)\notag\\
     &-H\bigg( \sum_{(u,v)\in[U]\times [V]}{{W}_{u,v}} \bigg) \notag\\
    &- H\bigg( \bigg\{\sum_{v\in[V]}X_{u,v}\bigg\}_{u \in [U]} \bigg|\{W_{u,v}\}_{(u,v) \in [U] \times [V] } \bigg)\label{Rates5}\\
    \stackrel{\eqref{Correctness5}}{=}&  H\bigg( \{Y_{u}\}_{u \in [U]}\bigg)-H\bigg( \sum_{(u,v)\in[U]\times [V]}{{W}_{u,v}} \bigg) \notag\\
    &- H\biggm( \bigg\{\sum_{v\in[V]}(W_{u,v}+Z_{u,v})\bigg\}_{u \in [U]} \bigg|  \notag \\
    &  \{W_{u,v}\}_{(u,v) \in [U] \times [V] } \biggm)\label{Rates5-1}\\
   {=}&  H\bigg( \{Y_{u}\}_{u \in [U]}\bigg)-H\bigg( \sum_{(u,v)\in[U]\times [V]}{{W}_{u,v}} \bigg)  - \notag\\
    & H\bigg( \mathbf{H}_2\mathbf{S}_2 \mid \{W_{u,v}\}_{(u,v) \in [U] \times [V]} \bigg)  
    \label{Rates6}\\
    \stackrel{\eqref{gruopwisesize}}{\leq}& (U - 1)L - H\left(\mathbf{H}_2\mathbf{S}_2\right)  \label{Rates7}\\
    =& (U - 1)L - (U - 1)L = 0  \label{Rates8} 
\end{align}
The third term in \eqref{Rates5} follows from the construction of $Y_u$. The first term in \eqref{Rates5-1} follows from the fact that $\sum_{u\in[U]} Y_u = \sum_{(u,v)\in[U]\times[V]} W_{u,v}$. The third term in \eqref{Rates5-1} follows from \eqref{Correctness5}. For the first term in \eqref{Rates7}, the uniform distribution maximizes entropy (measured in $q$-ary units).
The second term of (\ref{Rates7}) follows from the independence between the groupwise keys and the inputs. In the final step, we assume that $\mathrm{rank}(\mathbf{H}_2) = (U-1)L$ and $S_\mathcal{G}$ symbols are i.i.d. and uniform.

For any scheme of the form \eqref{ConstructY}, if \eqref{Rates4} equals zero, then all the above inequalities must be tight, implying that $\mathbf{H}_2$ must have rank $(U-1)L$ over $\mathbb{F}_q$.
\end{proof}

\subsection{\texorpdfstring{Existence of $\mathbf{H}_1$ and $\mathbf{H}_2$}{}}

In this section, we prove that there exist matrices $\mathbf{H}_1$ of the form \eqref{matrix_H5} and $\mathbf{H}_2$ of the form \eqref{matrix_H7} with the desired rank properties.

We relate the existence of $\mathbf{H}_1$ and $\mathbf{H}_2$ to the Schwartz--Zippel lemma. For each relay $u \in [U]$, the precoding matrices $\mathbf{H}^{u,v}_{\mathcal{G}}$ for $v \in [V]$ are generated independently and uniformly over a sufficiently large finite field $\mathbb{F}_q$ as described in \eqref{matrix_H5}. For each $\mathcal{G} \in \binom{[U]\times[V]}{G}$, we arbitrarily select $G-1$ matrices $\mathbf{H}^{u,v}_{\mathcal{G}}$ with $(u,v) \in \mathcal{G}$ and set the remaining one as the negative sum of the others to satisfy \eqref{Correctness6}. This ensures the zero-sum constraint while preserving randomness.

Consider the determinant of a maximal nonsingular submatrix of $\widehat{\mathbf{H}}_1$. This determinant is a multivariate polynomial in the entries of the randomly generated matrices. Since each entry is chosen independently from $\mathbb{F}_q$, the Schwartz--Zippel lemma guarantees that if this polynomial is not identically zero, then it evaluates to a nonzero value with high probability. Specifically, for a nonzero polynomial of total degree $d$, the probability that it evaluates to zero is at most $d/q$. By choosing $q$ sufficiently large, this probability can be made arbitrarily small. Hence, there exists a choice of matrices such that $\operatorname{rank}(\widehat{\mathbf{H}}_1) = VL$, and the probability that this rank condition holds approaches $1$ as $q \to \infty$.

The same argument applies to $\mathbf{H}_2$, where the rank of $\widehat{\mathbf{H}}_2$ is $(U-1)L$. Therefore, the Schwartz--Zippel lemma ensures the existence of $\mathbf{H}_1$ and $\mathbf{H}_2$ with the desired ranks, completing the proof.

\section{Converse}

In this section, we establish information-theoretic lower bounds on the communication rates $R_X$, $R_Y$, and the groupwise key rate $R_S$. 
These bounds characterize the fundamental limits imposed by the correctness and security constraints. 
Since they match the achievable rates derived in Section~\ref{sec:GENERAL SCHEME}, the optimality of the proposed scheme follows.

We divide the proof into two parts. 
First, we show that the problem is infeasible when $G = 1$. 
Then, we consider the feasible regime $1 < G \leq UV$, where we derive the necessary lower bounds on $R_X$, $R_Y$, and $R_S$.

\subsection{\texorpdfstring{Infeasible Regime: $G = 1$}{}}

We partition the set of relays into two nonempty and disjoint subsets $\mathcal{U}_1, \mathcal{U}_2 \subseteq [U]$ such that
$(\mathcal{U}_1 \times [V]) \cup (\mathcal{U}_2 \times [V]) = [U] \times [V],$
with $|\mathcal{U}_1| \ge 1$ and $|\mathcal{U}_2| \ge 1$.
When $G = 1$, each groupwise key is held by a single user and is not shared across users. 
Since the sets $\mathcal{U}_1 \times [V]$ and $\mathcal{U}_2 \times [V]$ are disjoint, it follows that
\begin{align}
    I\left( \{Z_{u,v}\}_{(u,v)\in \mathcal{U}_1 \times [V]} ; \{Z_{u,v}\}_{(u,v)\in \mathcal{U}_2 \times [V]} \right) = 0. \label{independengG}
\end{align}

We next derive a lower bound on the same mutual information term in \eqref{independengG}, which will lead to a contradiction.
\begin{align}
& I\bigl( \{Z_{u,v}\}_{(u,v)\in \mathcal{U}_1 \times [V]}; \{Z_{u,v}\}_{(u,v)\in \mathcal{U}_2 \times [V]}  \bigr) \label{A-87}\\
\stackrel{(\ref{independent2})}{=} & I\bigl( \{W_{u,v}, Z_{u,v}\}_{(u,v)\in \mathcal{U}_1 \times [V]}; \{W_{u,v}, Z_{u,v}\}_{(u,v)\in \mathcal{U}_2 \times [V]}  \bigr)  \label{A-88}\\
\stackrel{(\ref{messageX})}{\ge}& I\bigl( \{W_{u,v}, X_{u,v}\}_{(u,v)\in \mathcal{U}_1 \times [V]}; \{W_{u,v}, X_{u,v}\}_{(u,v)\in \mathcal{U}_2 \times [V]}  \bigr)  \\
\stackrel{(\ref{messageY})}{\ge}& I\bigl( \{W_{u,v}\}_{(u,v)\in \mathcal{U}_1 \times [V]}, \{Y_{u}\}_{u\in \mathcal{U}_1 }; \{W_{u,v}\}_{(u,v)\in \mathcal{U}_2 \times [V]}, \notag \\
& \{Y_{u}\}_{u\in \mathcal{U}_2 }  \bigr)  \\
\ge & I\!\bigg( \sum_{(u,v)\in \mathcal{U}_1 \times [V]} W_{u,v};\, \{W_{u,v}\}_{(u,v)\in \mathcal{U}_2 \times [V]}, \{Y_{u}\}_{u\in \mathcal{U}_2 } \Big| \notag\\
&\{Y_{u}\}_{u\in \mathcal{U}_1} \bigg)  \\
= &H\!\bigg( \sum_{(u,v)\in \mathcal{U}_1 \times [V]} W_{u,v} \Big| \{Y_{u}\}_{u\in \mathcal{U}_1} \bigg)- \notag\\
\quad & \underbrace{H\!\bigg( \sum_{(u,v)\in \mathcal{U}_1 \times [V]} W_{u,v} \Big| \{Y_{u}\}_{u\in \mathcal{U}_1}, W_{\mathcal{U}_2 \times [V]}, \{Y_{u}\}_{u\in \mathcal{U}_2 } \bigg)}_{\stackrel{(\ref{Correctness})}{=}0} \label{A-92} \\
\geq &H\bigg( \sum_{(u,v)\in \mathcal{U}_1 \times [V]} W_{u,v} \Big| \sum_{(u,v)\in [U] \times [V]} W_{u,v},\{Y_{u}\}_{u\in \mathcal{U}_1} \bigg) \\
=& H\!\bigg( \sum_{(u,v)\in \mathcal{U}_1 \times [V]} W_{u,v}  \bigg)  - \notag\\
& I\!\bigg( \sum_{(u,v)\in \mathcal{U}_1 \times [V]} W_{u,v};\, \sum_{(u,v)\in [U] \times [V]} W_{u,v},\{Y_{u}\}_{u\in\mathcal{U}_1}  \bigg)  \label{A-93}\\
   =& H\!\bigg( \sum_{(u,v)\in \mathcal{U}_1 \times [V]} W_{u,v} \bigg) 
   \notag\\
    &- \underbrace{I\!\bigg( \sum_{(u,v)\in \mathcal{U}_1 \times [V]} W_{u,v};\, \sum_{(u,v)\in [U] \times [V]} W_{u,v}  \bigg)}_{\stackrel{(\ref{independent2})}{=}0}  \notag\\
   & - \underbrace{I\!\bigg( \sum_{(u,v)\in \mathcal{U}_1 \times [V]} W_{u,v};\, \{Y_{u}\}_{u\in\mathcal{U}_1} \Big| \sum_{(u,v)\in [U] \times [V]} W_{u,v} \bigg)}_{\stackrel{(\ref{ServerSecurity})}{=}0}   \label{A-944}\\
\geq & L. \label{A-90}
\end{align}
In (\ref{A-88}), we used the independence between the inputs and the keys, together with the fact that $\mathcal{U}_1 \times [V]$ and $\mathcal{U}_2 \times [V]$ are disjoint. 
The second term in \eqref{A-92} is zero due to the correctness constraint in \eqref{Correctness}. 
The second and third terms in \eqref{A-944} are zero due to the independence of the inputs and the server security constraint in \eqref{ServerSecurity}, respectively.

Comparing \eqref{independengG} and \eqref{A-90}, we obtain a contradiction, i.e., $0 \geq L$. 
Therefore, the problem is infeasible when $G = 1$.

\subsection{\texorpdfstring{Feasible Regime: $1 < G \leq UV$}{}}

In this regime, we establish fundamental lower bounds on the communication and groupwise key rates. 
We begin by presenting a key lemma, which shows that each transmitted message must contain at least $L$ symbols even when all other users' inputs and individual keys are revealed. 
This result will be repeatedly used to derive the converse bounds on $R_X$ and $R_Y$. 

\begin{lemma} 
For any $(u,v) \in [U]\times [V]$, we have
\begin{equation}
H\left(X_{u,v} \mid \{W_{i,j}, Z_{i,j}\}_{(i,j) \in ([U]\times [V])\setminus \{(u,v)\}}\right) \geq L.  
\label{eq:step81}
\end{equation}
\end{lemma}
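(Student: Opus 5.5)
The bound will follow from the correctness constraint \eqref{Correctness} together with the independence assumptions \eqref{independent}--\eqref{independent2}, in the standard cut-set style used for secure aggregation converses. Fix $(u,v)$ and abbreviate the conditioning set as $\mathcal{C}\triangleq\{W_{i,j},Z_{i,j}\}_{(i,j)\neq(u,v)}$. First, lower bound the conditional entropy by a mutual information with the input of user $(u,v)$:
\begin{align}
H(X_{u,v}\mid \mathcal{C}) \ge I(X_{u,v};W_{u,v}\mid \mathcal{C}) = H(W_{u,v}\mid\mathcal{C}) - H(W_{u,v}\mid X_{u,v},\mathcal{C}).
\end{align}
By \eqref{independent2}, the inputs are independent of all keys and mutually independent, so $H(W_{u,v}\mid\mathcal{C})=H(W_{u,v})=L$ by \eqref{inputsize}. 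It then remains to show that $H(W_{u,v}\mid X_{u,v},\mathcal{C})=0$.

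For that step, use the functional dependencies. From \eqref{messageX}, every $X_{i,j}$ with $(i,j)\neq(u,v)$ is a deterministic function of $\mathcal{C}$. Hence $(X_{u,v},\mathcal{C})$ determines all first-hop messages $\{X_{i,j}\}_{(i,j)\in[U]\times[V]}$, and by \eqref{messageY} all relay messages $\{Y_i\}_{i\in[U]}$. Correctness \eqref{Correctness} then gives $\sum_{(i,j)}W_{i,j}$ as a function of $(X_{u,v},\mathcal{C})$. Subtracting the known inputs $\{W_{i,j}\}_{(i,j)\neq(u,v)}\subseteq\mathcal{C}$ recovers $W_{u,v}$, so
\begin{align}
H(W_{u,v}\mid X_{u,v},\mathcal{C}) \le H\Big(\sum_{(i,j)\in[U]\times[V]}W_{i,j}\,\Big|\,\{Y_i\}_{i\in[U]}\Big)=0.
\end{align}
Combining the two displays yields \eqref{eq:step81}.

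The only delicate point is making the chain of functional dependencies explicit. In particular, one must note that conditioning on $\mathcal{C}$ rather than only on $\{Y_i\}$ can only reduce entropy, which justifies the inequality in the last display. I expect no real obstacle. Security constraints are not needed here, and the assumption $G>1$ plays no role in this lemma.
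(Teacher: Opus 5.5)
Your proposal is correct and is essentially the paper's proof. The paper lower bounds $H(X_{u,v}\mid\mathcal{C})$ by $I\big(X_{u,v};\sum_{(i,j)}W_{i,j}\,\big|\,\mathcal{C}\big)$ rather than by $I(X_{u,v};W_{u,v}\mid\mathcal{C})$, but since $\mathcal{C}$ contains all other inputs, $W_{u,v}$ and the sum determine each other given $\mathcal{C}$, so the two quantities coincide. The remaining steps are the same as in the paper: independence of $W_{u,v}$ from $\mathcal{C}$, and the chain $(X_{u,v},\mathcal{C})\to\{X_{i,j}\}\to\{Y_i\}\to\sum W$ via \eqref{messageX}, \eqref{messageY} and \eqref{Correctness}.
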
 

\begin{proof}
\begin{align}
& H\left(X_{u,v} \mid \{W_{i,j}, Z_{i,j}\}_{(i,j) \in ([U]\times [V])\setminus \{(u,v)\}}\right) \notag\\
\geq& I\Big( X_{u,v}; \sum_{(u',v') \in [U]\times[V]} W_{u',v'} \Big|\notag\\
&\{W_{i,j}, Z_{i,j}\}_{(i,j) \in ([U]\times [V])\setminus \{(u,v)\}} \Big)  \label{eq:step82}\\ 
=& H\Big( \sum_{(u',v') \in [U]\times[V]} W_{u',v'} \Big| \notag \\
&  \{W_{i,j}, Z_{i,j}\}_{(i,j) \in ([U]\times [V])\setminus \{(u,v)\}} \Big) \notag\\
& - H\Big(  \sum_{(u',v') \in [U]\times[V]} W_{u',v'} \Big| X_{u,v},  \\
&  \{W_{i,j}, Z_{i,j}\}_{(i,j) \in ([U]\times [V])\setminus \{(u,v)\}} \Big)   \label{eq:step83}\\
\overset{ \eqref{messageX}}{\geq}& H(W_{u,v})  \notag\\
&- H\Big( \sum_{(u',v') \in [U]\times[V]} W_{u',v'} \Big|  \{X_{u,v}\}_{(u,v) \in [U]\times[V]} \Big)  \label{eq:step84} \\
\overset{\eqref{messageY}}{\geq}& H(W_{u,v}) 
  -\underbrace{H\Big( \sum_{(u',v') \in [U]\times[V]} W_{u',v'} \Big| \{Y_u\}_{u \in [U]} \Big)}_{\overset{(\ref{Correctness})}{=}0}   \label{eq:step85}\\  
=& L.  \label{eq:step86}
\end{align}
\end{proof}
We now explain the key steps in the above derivation. 
The first term in \eqref{eq:step84} follows from the independence of the inputs, since $W_{u,v}$ is independent of $\{W_{i,j}, Z_{i,j}\}_{(i,j)\in ([U]\times [V])\setminus \{(u,v)\}}$. 
The second term in \eqref{eq:step84} follows because $X_{u,v}$ is a deterministic function of $(W_{u,v}, Z_{u,v})$, as given in \eqref{messageX}. 

Next, the inequality in \eqref{eq:step85} follows from the fact that $Y_u$ is a deterministic function of $\{X_{u,v}\}_{v\in[V]}$ (see \eqref{messageY}). 
Finally, the underbraced term in \eqref{eq:step85} is zero due to the correctness constraint in \eqref{Correctness}, which ensures that $\{Y_u\}_{u \in [U]}$ uniquely determines the sum $\sum_{(u',v') \in [U]\times[V]} W_{u',v'}$.

\begin{lemma} 
\label{lem:lemma4}
For any $u \in [U]$, we have
\begin{equation}
H\left(Y_{u} \mid \{W_{i,j}, Z_{i,j}\}_{(i,j) \in ([U]\times [V])\setminus \{(u,v)\}}\right) \geq L.  \label{eq:step87}
\end{equation}
\end{lemma}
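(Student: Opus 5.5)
We mirror the proof of the preceding lemma for $X_{u,v}$, with the relay message $Y_u$ playing the role of $X_{u,v}$. We read the conditioning set in \eqref{eq:step87} as containing the inputs and keys of all users except some fixed user $(u,v)$ attached to relay $u$. Fix such a $v\in[V]$ and write $\mathcal{C}\triangleq\{W_{i,j},Z_{i,j}\}_{(i,j)\in([U]\times[V])\setminus\{(u,v)\}}$.

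The first step lower-bounds the conditional entropy by a mutual information with the desired sum:
\[
H(Y_u\mid\mathcal{C})\ \ge\ I\Big(Y_u;\sum_{(u',v')\in[U]\times[V]}W_{u',v'}\,\Big|\,\mathcal{C}\Big)
= H\Big(\sum W_{u',v'}\,\Big|\,\mathcal{C}\Big)-H\Big(\sum W_{u',v'}\,\Big|\,Y_u,\mathcal{C}\Big).
\]

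The first term equals $H(W_{u,v}\mid\mathcal{C})$. Since the inputs are independent of one another and of all keys by \eqref{independent2}, this equals $H(W_{u,v})=L$ by \eqref{inputsize}.

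For the second term, the key observation concerns what $\mathcal{C}$ determines. By \eqref{messageX}, $\mathcal{C}$ determines every $X_{i,j}$ with $(i,j)\neq(u,v)$. Hence, by \eqref{messageY}, it determines every $Y_{u'}$ with $u'\neq u$, because those relays' inputs contain no message from user $(u,v)$. Therefore $(Y_u,\mathcal{C})$ determines all of $\{Y_{u'}\}_{u'\in[U]}$. Conditioning reduces entropy, so the second term is at most $H(\sum W_{u',v'}\mid\{Y_{u'}\}_{u'\in[U]})$, which is $0$ by correctness \eqref{Correctness}. This gives $H(Y_u\mid\mathcal{C})\ge L$.

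The only delicate point is the determination step. We must check that the user left out of $\mathcal{C}$ belongs to relay $u$, so that only $Y_u$ is missing from the server's view; $Y_u$ is then supplied explicitly in the conditioning. If instead the statement were meant with the excluded user attached to another relay, the argument would fail, since that relay's message would be unknown. So we fix $(u,v)\in\mathcal{M}_u$, as the indexing in \eqref{eq:step87} indicates. Everything else is the same chain of equalities as in \eqref{eq:step82}--\eqref{eq:step86}.
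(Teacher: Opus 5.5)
Your proof is correct and follows the paper's argument essentially step for step. You bound $H(Y_u\mid\cdot)$ by the mutual information with the total sum, evaluate the first term as $H(W_{u,v})=L$ by independence, and kill the second term by noting that the other users' inputs and keys determine $\{X_{i,j}\}_{(i,j)\neq(u,v)}$, hence (together with $Y_u$) all of $\{Y_{u'}\}_{u'\in[U]}$, so correctness applies. Your remark that the excluded user must lie in $\mathcal{M}_u$ is exactly the implicit reading the paper relies on (cf.\ the index $(j,v_j)$ in \eqref{messageX22}).
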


\begin{proof}
\begin{align}
& H\left(Y_{u} \mid \{W_{i,j}, Z_{i,j}\}_{(i,j) \in ([U]\times [V])\setminus \{(u,v)\}}\right) \notag\\
\geq& I\Big( Y_{u}; \sum_{(u',v') \in [U]\times[V]} W_{u',v'} \Big| \\
&  \{W_{i,j}, Z_{i,j}\}_{(i,j) \in ([U]\times [V])\setminus \{(u,v)\}} \Big)  \label{eq:step88}\\ 
=& H\Big( \sum_{(u',v') \in [U]\times[V]} W_{u',v'} \Big|  \\
&  \{W_{i,j}, Z_{i,j}\}_{(i,j) \in ([U]\times [V])\setminus \{(u,v)\}} \Big) -\notag\\
&  H\Big(  \sum_{(u',v') \in [U]\times[V]} W_{u',v'} \Big| Y_{u}, \\
& \{W_{i,j}, Z_{i,j}\}_{(i,j) \in ([U]\times [V])\setminus \{(u,v)\}} \Big)   \label{eq:step89}\\
\overset{\eqref{messageX}}{\geq}&  H\Big( \sum_{(u',v') \in [U]\times[V]} W_{u',v'} \Big|  \\
&  \{W_{i,j}, Z_{i,j}\}_{(i,j) \in ([U]\times [V])\setminus \{(u,v)\}} \Big) \notag\\ 
& - H\Big(  \sum_{(u',v') \in [U]\times[V]} W_{u',v'} \Big| Y_{u}, \\
&  \{X_{i,j}\}_{(i,j) \in ([U]\times [V])\setminus \{(u,v)\}} \Big)   \label{eq:step90}\\
\overset{\eqref{messageY}}{\geq}& H\left( W_{u,v}\right) 
  -\underbrace{H\Big( \sum_{(u',v') \in [U]\times[V]} W_{u',v'} \Big| \{Y_u\}_{u \in [U]} \Big)}_{\overset{(\ref{Correctness})}{=}0}  
  \label{eq:step91} \\ 
= & L  \label{eq:step92}
\end{align}
\end{proof}
The second term in \eqref{eq:step90} follows because $X_{i,j}$ is a deterministic function of $(W_{i,j}, Z_{i,j})$ (see \eqref{messageX}). 
The first term in \eqref{eq:step91} follows from the independence of the inputs, since $W_{u,v}$ is independent of $\{W_{i,j}, Z_{i,j}\}_{(i,j)\in [U]\times[V]\setminus\{(u,v)\}}$. 
The second term in \eqref{eq:step91} follows because $Y_u$ is a function of $\{X_{u,v}\}_{v \in [V]}$ (see \eqref{messageY}). 
Finally, the second term in \eqref{eq:step91} is zero due to the correctness constraint in \eqref{Correctness}, which ensures that $\{Y_u\}_{u \in [U]}$ determines $\sum_{(u',v') \in [U]\times[V]} W_{u',v'}$.

\begin{lemma}
\label{lem:lemma5}
For any $u \in [U]$, $\mathcal{V} \subseteq [V]$, we have
\begin{equation}
H\left( \{Z_{u,v}\}_{v \in \mathcal{V}}  \right) \geq |\mathcal{V}|L. \label{eq:step93}
\end{equation}
\end{lemma}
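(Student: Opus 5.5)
The plan is to combine the relay security constraint \eqref{RelaySecurity} with the per-user communication bound of the preceding lemma, \eqref{eq:step81}. The idea is that the messages $\{X_{u,v}\}_{v\in\mathcal{V}}$ must jointly carry $|\mathcal{V}|L$ symbols of entropy even after conditioning on the inputs, and that this randomness can only come from the keys.

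\textbf{Step 1: upper bound by the keys.} Since $X_{u,v}$ is a function of $(W_{u,v},Z_{u,v})$ by \eqref{messageX}, and the keys are independent of the inputs by \eqref{independent2}, we have
\begin{align}
H\big(\{Z_{u,v}\}_{v\in\mathcal{V}}\big)
&= H\big(\{Z_{u,v}\}_{v\in\mathcal{V}} \mid W_{[U]\times[V]}\big) \notag\\
&\ge H\big(\{X_{u,v}\}_{v\in\mathcal{V}} \mid W_{[U]\times[V]}\big). \notag
\end{align}

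\textbf{Step 2: remove the conditioning.} Relay security \eqref{RelaySecurity}, applied to the full set $\{X_{u,v}\}_{v\in[V]}$, implies by monotonicity that $I(\{X_{u,v}\}_{v\in\mathcal{V}};W_{[U]\times[V]})=0$. Hence the last term equals $H(\{X_{u,v}\}_{v\in\mathcal{V}})$.

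\textbf{Step 3: chain rule.} Order $\mathcal{V}=\{v_1,\dots,v_{|\mathcal{V}|}\}$ and expand
\begin{equation*}
H\big(\{X_{u,v}\}_{v\in\mathcal{V}}\big)=\sum_k H\big(X_{u,v_k}\mid X_{u,v_1},\dots,X_{u,v_{k-1}}\big).
\end{equation*}
For each term, enlarge the conditioning to $\{W_{i,j},Z_{i,j}\}_{(i,j)\neq(u,v_k)}$. This is valid because each $X_{u,v_j}$ with $j<k$ is a function of $(W_{u,v_j},Z_{u,v_j})$, and conditioning reduces entropy. Lemma \eqref{eq:step81} then gives each term $\ge L$, so the sum is $\ge |\mathcal{V}|L$.

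The main subtlety is Step 3. Each term in the chain must be bounded by conditioning on the other users' inputs and keys, not only on their messages, so that \eqref{eq:step81} applies exactly. One must also check that Step 2 legitimately drops the conditioning on all of $W_{[U]\times[V]}$, and not merely on the inputs of relay $u$.
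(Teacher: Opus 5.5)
Your proof is correct and follows essentially the same route as the paper: you bound $H(\{Z_{u,v}\}_{v\in\mathcal{V}})$ below by the conditional entropy of the messages given the inputs, remove that conditioning via relay security, and then apply the chain rule together with the per-user bound \eqref{eq:step81}. The differences are cosmetic: you condition on all of $W_{[U]\times[V]}$ rather than on $\{W_{u,v}\}_{v\in\mathcal{V}}$, and you enlarge each chain-rule conditioning directly to all other users' inputs and keys, which makes the appeal to \eqref{eq:step81} explicit where the paper leaves this final enlargement implicit.
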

\begin{proof}
\begin{eqnarray}
&&H\left( \{Z_{u,v}\}_{v \in \mathcal{V}}  \right) \notag\\
&\geq& I\left( \{Z_{u,v}\}_{v \in \mathcal{V}} ; \{X_{u,v}\}_{v \in \mathcal{V}} \mid \{W_{u,v}\}_{v \in \mathcal{V}} \right)  \label{eq:step95}\\
&=& H\left( \{X_{u,v}\}_{v \in \mathcal{V}} \mid \{W_{u,v}\}_{v \in \mathcal{V}} \right) \notag\\
  &&- \underbrace{H\left( \{X_{u,v}\}_{v \in \mathcal{V}} \mid \{Z_{u,v}\}_{v \in \mathcal{V}}, \{W_{u,v}\}_{v \in \mathcal{V}} \right)}_{\stackrel{\eqref{messageX}}{=} 0}  \label{eq:step96}\\
&=& H\left( \{X_{u,v}\}_{v \in \mathcal{V}}  \right)  - \underbrace{I\left( \{X_{u,v}\}_{v \in \mathcal{V}} ; \{W_{u,v}\}_{v \in \mathcal{V}}  \right)}_{\stackrel{\eqref{RelaySecurity}}{=} 0}  \label{eq:step97}\\
&=& \sum_{i=1}^{|\mathcal{V}|} H\left( X_{u,v_i} \mid \{X_{u,v_k}\}_{k \in [1,i-1]} \right)  \label{eq:step98}\\
&\geq& \sum_{v \in \mathcal{V}} H\left( X_{u,v} \mid \{X_{u,k}\}_{k \in \mathcal{V} \setminus \{v\}} \right)  \label{eq:step99}\\
&\geq& \sum_{v \in \mathcal{V}} H\left( X_{u,v} \mid \{W_{u,k}, Z_{u,k}\}_{k \in \mathcal{V} \setminus \{v\}}, \{X_{u,k}\}_{k \in \mathcal{V} \setminus \{v\}} \right)  \notag \\
\label{messageX00}\\
&\stackrel{\eqref{messageX}}{=}& \sum_{v \in \mathcal{V}} H\left( X_{u,v} \mid \{W_{u,k}, Z_{u,k}\}_{k \in \mathcal{V} \setminus \{v\}} \right) \label{messageX01}\\
&\stackrel{\eqref{eq:step81}}{\geq}& |\mathcal{V}|L \label{messageX02}
\end{eqnarray}
\end{proof} 
Among them, the second term of \eqref{eq:step97} is zero due to the relay security constraint \eqref{RelaySecurity}. 
In \eqref{eq:step98}, we denote $\mathcal{V} = \{v_1, \dots, v_{|\mathcal{V}|}\}$. 
\eqref{eq:step99} holds because conditioning reduces entropy.

By the above lemmas, we derive the converse bounds on the communication rates $R_X$, $R_Y$, and the groupwise key rate $R_S$.

\textit{Proof of $R_X \geq 1$:} For any $(u,v) \in [U]\times [V]$, we have
\begin{align}
L_X &= H(X_{u,v}) \notag \\
&\geq H\big( X_{u,v} \big| \{W_{i,j}, Z_{i,j}\}_{(i,j) \in [U] \times [V] \setminus \{(u,v)\}} \big)  \stackrel{\eqref{eq:step81}}{\geq} L \label{messageX04}
\end{align}
which implies that $R_X = \frac{L_X}{L} \geq 1$.

\textit{Proof of $R_Y \geq 1$:} For any $u \in [U]$, we have
\begin{align}
L_Y &= H(Y_u) \notag \\
&\geq H\big( Y_u \big| \{W_{i,j}, Z_{i,j}\}_{(i,j) \in [U] \times [V] \setminus \{(u,v)\}} \big)  \stackrel{\eqref{eq:step87}}{\geq} L \label{messageX05}
\end{align}
which implies that $R_Y = \frac{L_Y}{L} \geq 1$.

Next, we establish the converse bound on the groupwise key rate $R_S$. 
The relay security and server security constraints impose two necessary lower bounds on $R_S$, given by
\begin{align}
    R_S \geq \max\left\{
\dfrac{V}{\dbinom{UV}{G} - \dbinom{(U-1)V}{G}},\;
\dfrac{U - 1}{\dbinom{UV}{G} - U \dbinom{V}{G}}
\right\}
\end{align}
We derive these two bounds in the following.

\emph{Proof of $R_S \geq \frac{V}{\binom{UV }{G} - \binom{(U -1)V}{G}}$:}
Consider any relay $u \in [U]$. 
Applying Lemma~\ref{lem:lemma5} with $\mathcal{V} = [V]$, we have
\begin{align}
&VL \notag \\
\stackrel{\eqref{eq:step93}}{\leq}& 
H\left( \{Z_{u,v}\}_{v \in [V]}  \right) \label{messageX07}\\
\overset{(\ref{individualkey})}{=}& H\left( 
    \{S_\mathcal{G}\}_{\substack{\mathcal{G}\in \binom{[U] \times [V]}{G}, 
    \mathcal{G} \cap \mathcal{M}_u \neq \emptyset}}  
\right) \label{messageX08}\\
=& H\left( 
    \{S_\mathcal{G}\}_{\substack{\mathcal{G} \in \binom{[U]\times [V]}{G}}} 
\right) -H\left( 
    \{S_\mathcal{G}\}_{\substack{\mathcal{G} \in \binom{[U]\times [V]}{G}},\mathcal{G} \cap \mathcal{M}_u =\emptyset} 
\right) \notag \\
\label{messageX10}\\
\stackrel{\eqref{gruopwisesize}}{=}& \left[\binom{UV }{G} - \binom{(U -1)V}{G}\right] \times L_S \label{messageX12}
\end{align}
Here, \eqref{messageX08} follows from expressing each $Z_{u,v}$ in terms of the groupwise keys. 
In \eqref{messageX10}, we partition the collection of all subsets $\mathcal{G} \in \binom{[U]\times[V]}{G}$ into those that intersect with $\mathcal{M}_u$ and those that do not. 
Since the groupwise keys $\{S_{\mathcal{G}}\}$ are mutually independent, the total entropy equals the sum of the entropies of the corresponding subsets, which leads to the difference form.

Therefore,
\begin{eqnarray}
R_S  \overset{(\ref{Rates})}{=} \frac{L_S}{L} 
\ge 
\dfrac{V}{\dbinom{UV }{G} - \dbinom{(U -1)V}{G}}. \label{messageX13}
\end{eqnarray}

\emph{Proof of $R_S \geq \frac{U-1}{\binom{UV }{G} - U\binom{V}{G}}$:}
Consider any user $(u,v) \in [U]\times[V]$. We have
\begin{align}
&\quad H\left( \{S_{\mathcal{G}}\}_{\mathcal{G} \in \binom{[U]\times[V]}{G},\; \mathcal{G} \notin \binom{\mathcal{M}_u}{G},\; u \in [U]} \right) \notag \\
& \geq H\left( \{S_{\mathcal{G}}\}_{\mathcal{G} \in \binom{[U]\times[V]}{G},\; \mathcal{G} \notin \binom{\mathcal{M}_u}{G},\; u \in [U]} \mid W_{[U]\times[V]} \right) \label{messageX15} \\
& \geq I\left( \{S_{\mathcal{G}}\}_{\mathcal{G} \in \binom{[U]\times[V]}{G},\; \mathcal{G} \notin \binom{\mathcal{M}_u}{G},\; u \in [U]}; \{Y_u\}_{u \in [U]} \mid W_{[U]\times[V]} \right) \label{messageX16} \\
& = H\left( \{Y_u\}_{u \in[U]} \mid W_{[U]\times[V]} \right) -\notag\\
&\quad  \underbrace{H\big( \{Y_u\}_{u \in[U]} \mid \{S_{\mathcal{G}}\}_{\mathcal{G} \in \binom{[U]\times[V]}{G},\; \mathcal{G} \notin \binom{\mathcal{M}_u}{G},\; u \in [U]}, W_{[U]\times[V]} \big)}_{=0} \label{messageX17} \\
& = H\left( \{Y_u\}_{u \in [U]} \right) - I\left( \{Y_u\}_{u \in [U]}; W_{[U]\times[V]} \right) \notag \\
& = \sum_{j=1}^U H\left( Y_j \mid \{Y_u\}_{u \in [1:j-1]} \right) - \notag\\
&\quad I\Big( \{Y_u\}_{u \in [U]}; W_{[U]\times[V]}, \sum_{(u,v)\in [U]\times [V]} W_{u,v} \Big) \label{messageX19} \\
& \geq \sum_{j=1}^U H\left( Y_j \mid \{Y_u\}_{u \in [U] \setminus \{j\}} \right) - \notag \\
&\quad I\Big( \{Y_u\}_{u \in [U]}; W_{[U]\times[V]}, \sum_{(u,v)\in [U]\times [V]} W_{u,v} \Big) \label{messageX20} \\
& \geq \sum_{j=1}^U H\Big( Y_j \mid \{Y_u\}_{u \in [U] \setminus \{j\}},  \notag \\ 
&\quad  \{W_{u,v}, Z_{u,v}\}_{(u,v) \in ([U] \setminus \{j\}) \times [V]} \Big) \notag \\
&\quad - I\Big( \{Y_u\}_{u \in [U]}; \sum_{(u,v)\in [U]\times [V]} W_{u,v} \Big) \notag\\
&\quad- \underbrace{I\Big( \{Y_u\}_{u \in [U]}; W_{[U]\times[V]} \bigg| \sum_{(u,v)\in [U]\times [V]} W_{u,v} \Big)}_{\stackrel{\eqref{ServerSecurity}}{=} 0} \label{messageX21} \\
& \stackrel{\eqref{messageX}\eqref{messageY}}{\geq} \sum_{j=1}^U H\left( Y_j \mid \{W_{u,v}, Z_{u,v}\}_{(u,v) \in ([U] \times [V]) \setminus \{(j,v_j)\}} \right) \notag \\
& \qquad - H\Big( \sum_{(u,v)\in [U]\times [V]} W_{u,v} \Big) \notag\\
& \qquad + \underbrace{H\Big( \sum_{(u,v)\in [U]\times [V]} W_{u,v} \bigg| \{Y_u\}_{u \in [U]} \Big)}_{\stackrel{\eqref{Correctness}}{=} 0} \label{messageX22} \\
& \stackrel{\eqref{eq:step87}}{\geq} UL - H\Big( \sum_{(u,v) \in ([U] \times [V])} W_{u,v} \Big) \label{messageX23} \\
& = (U-1)L. \label{messageX24}
\end{align}
We now justify the key steps. 
The second term in \eqref{messageX17} is zero because $Y_u$ cannot depend on groupwise keys that are shared exclusively among users within a single relay. 
Otherwise, such keys would introduce components that cannot be canceled across relays, violating the correctness constraint. 
Therefore, $Y_u$ must be a deterministic function of 
$\{W_{u,v}\}_{(u,v)\in \mathcal{M}_u}$ and the groupwise keys that involve users from multiple relays, both of which are included in the conditioning of \eqref{messageX17}. 

The third term in \eqref{messageX21} is zero due to the server security constraint in \eqref{ServerSecurity}. 
The last term in \eqref{messageX22} is zero due to the correctness constraint, which ensures that the sum $\sum_{(u,v)} W_{u,v}$ is recoverable from $\{Y_u\}_{u\in[U]}$. 
Step \eqref{messageX23} follows from Lemma~\ref{lem:lemma4}.

Combining \eqref{messageX15}--\eqref{messageX24}, we obtain
\begin{eqnarray}
  (U-1)L &\leq&  H\left( \{S_{\mathcal{G}}\}_{\mathcal{G} \in \binom{[U]\times[V]}{G},\mathcal{G} \notin \binom{\mathcal{M}_u}{G},u \in [U]}  \right)\label{messageX30}\\
 &{=}&  H\left( \{S_{\mathcal{G}}\}_{\mathcal{G} \in \binom{[U]\times[V]}{G} }  \right)  \notag \\ 
 &&- H\left( \{S_{\mathcal{G}}\}_{\mathcal{G} \in \binom{\mathcal{M}_u}{G},u\in[U] }  \right) \label{messageX31}\\
 &\stackrel{\eqref{gruopwisesize}}{=}&  \left[\binom{UV}{G} -     U\binom{V}{G}\right] L_S. \label{messageX32}
\end{eqnarray}

Therefore,
\begin{align}
R_{S} = \frac{L_S}{L} 
\geq \dfrac{U-1} {\dbinom{UV}{G} - U\dbinom{V}{G}}. \label{messageX33}
\end{align}

In \eqref{messageX31}, we partition the collection of all subsets $\mathcal{G} \in \binom{[U]\times[V]}{G}$ into those that lie entirely within a single relay and those that span multiple relays. 
Since the corresponding groupwise keys are mutually independent, the entropy equals the difference between the total entropy and the entropy of the intra-relay keys.

Finally, the converse bound on the groupwise key rate $R_S$ imposed by relay security follows from \eqref{messageX13}, while the bound imposed by server security follows from \eqref{messageX33}. 
Combining these bounds, we obtain
\begin{align}
    R_S \geq \max\left\{
\dfrac{V}{\dbinom{UV}{G} - \dbinom{(U-1)V}{G}},\;
\dfrac{U - 1}{\dbinom{UV}{G} - U \dbinom{V}{G}}
\right\}.
\end{align}
This completes the proof.

\section{Conclusion}

In this paper, we investigated the capacity of hierarchical secure aggregation with groupwise keys. 
We characterized the fundamental tradeoff among the user-to-relay communication rate $R_X$, the relay-to-server communication rate $R_Y$, and the groupwise key rate $R_S$ under both relay security and server security constraints.

We first established that the problem is infeasible when $G = 1$, demonstrating that nontrivial key sharing across users is necessary for secure aggregation. 
For the feasible regime $1 < G \leq UV$, we proposed a linear coding scheme based on structured groupwise key mixing. 
The security guarantees were characterized through rank conditions on the effective precoding matrices, and the existence of such constructions was ensured via the Schwartz--Zippel lemma.

On the converse side, we derived tight information-theoretic lower bounds on $R_X$, $R_Y$, and $R_S$. 
In particular, we showed that $R_X \geq 1$ and $R_Y \geq 1$, and established matching lower bounds on the groupwise key rate imposed by relay and server security constraints. 
These bounds match the achievable scheme, thereby fully characterizing the capacity region.

Our results highlight the critical role of structured key sharing and linear mixing in achieving optimal secure aggregation. 
Several interesting directions remain open. 
First, it would be of interest to extend the current model to a $T$-secure setting, where up to $T$ relays or users may collude, and to characterize the corresponding capacity region. 
Second, the impact of user dropouts or stragglers on the optimal rates remains unclear and warrants further investigation. 
Third, it is desirable to develop explicit deterministic constructions that achieve the required rank conditions without relying on large field sizes. 
Finally, extending the framework to more general network topologies and heterogeneous connectivity patterns is another promising direction.

\bibliographystyle{IEEEtran}
\bibliography{references_secagg.bib}
\end{document}